\documentclass[5p,times]{elsarticle}

\usepackage{amsmath,amssymb,amsfonts}
\usepackage{amsthm}
\usepackage{algorithm}
\usepackage{algorithmic}
\usepackage{graphicx}
\usepackage{subcaption}
\usepackage{booktabs}
\usepackage{multirow}
\usepackage{makecell}
\usepackage{array,tabularx}
\usepackage{bm}
\usepackage[hidelinks]{hyperref}
\usepackage{lineno}
\usepackage{placeins}

\theoremstyle{plain}
\newtheorem{theorem}{Theorem}
\newtheorem{proposition}{Proposition}

\theoremstyle{definition}
\newtheorem{definition}{Definition}
\newtheorem{assumption}{Assumption}
\newtheorem{insight}{Insight}

\theoremstyle{remark}
\newtheorem{remark}{Remark}

\newcommand{\E}{\mathbb{E}}
\newcommand{\R}{\mathbb{R}}
\newcommand{\N}{\mathcal{N}}
\newcommand{\ext}{\mathrm{ext}}
\newcommand{\intr}{\mathrm{int}}
\newcommand{\rsq}{\mathrm{RSQ}}

\allowdisplaybreaks

\journal{Neurocomputing}

\begin{document}

\begin{frontmatter}

\title{Quality-Aware Exploration Budget Allocation for Cooperative Multi-Agent Reinforcement Learning}

\author[1]{Dahyun Oh}
\ead{qlass33@snu.ac.kr}
\author[1]{Minhyuk Yoon}
\ead{likenuclear@snu.ac.kr}
\author[1]{H. Jin Kim\corref{cor}}
\ead{hjinkim@snu.ac.kr}
\cortext[cor]{Corresponding author.}
\affiliation[1]{organization={Department of Aerospace Engineering, Seoul National University},
            city={Seoul},
            country={Republic of Korea}}

\begin{abstract}
Cooperative multi-agent reinforcement learning (MARL) requires agents to discover joint strategies in a combinatorially large state-action space, yet effective coordination configurations are exceedingly rare. Intrinsic motivation, which augments task rewards with novelty bonuses, is a popular approach for driving exploration, but its effectiveness hinges on the exploration intensity $\beta$, where too large a value overwhelms the task signal and causes coordination collapse, while too small a value prevents discovery of rare strategies. We address two complementary challenges: adapting $\beta$ globally over training, and allocating the exploration budget across agents whose intrinsic reward signals vary in reliability. Our framework combines a return-conditioned sigmoid schedule (RCB) for global intensity control with a per-agent Reward Signal Quality (RSQ) metric that concentrates the exploration budget on agents with reliable signals. The core insight is that agents receiving noisy intrinsic rewards should explore less aggressively, and this allocation can be determined automatically from signal-to-noise statistics. Successor Distance (SD), a quasimetric intrinsic reward, naturally produces distinguishable per-agent signal quality, completing the framework with convergence and ordering preservation guarantees. On seven cooperative benchmarks (MPE, SMAX, MABrax), our method achieves top-tier returns across all environments.
\end{abstract}

\begin{keyword}
multi-agent reinforcement learning \sep exploration scheduling \sep signal-to-noise ratio \sep intrinsic motivation \sep reward signal quality
\end{keyword}

\end{frontmatter}

\section{Introduction}
\label{sec:introduction}

Cooperative multi-agent reinforcement learning (MARL) \\
trains teams of autonomous agents to jointly solve shared tasks through trial and error~\cite{albrecht2024multi}. A fundamental challenge is \emph{exploration}: the joint state-action space grows combinatorially with the number of agents, yet configurations yielding effective coordination are exceedingly rare~\cite{panait2005cooperative, hu2024review}. Without sufficient exploration, agents converge to suboptimal behaviors because they never discover the coordinated strategies that produce high team returns.

\begin{figure*}[!t]
    \centering
    \includegraphics[width=0.75\textwidth]{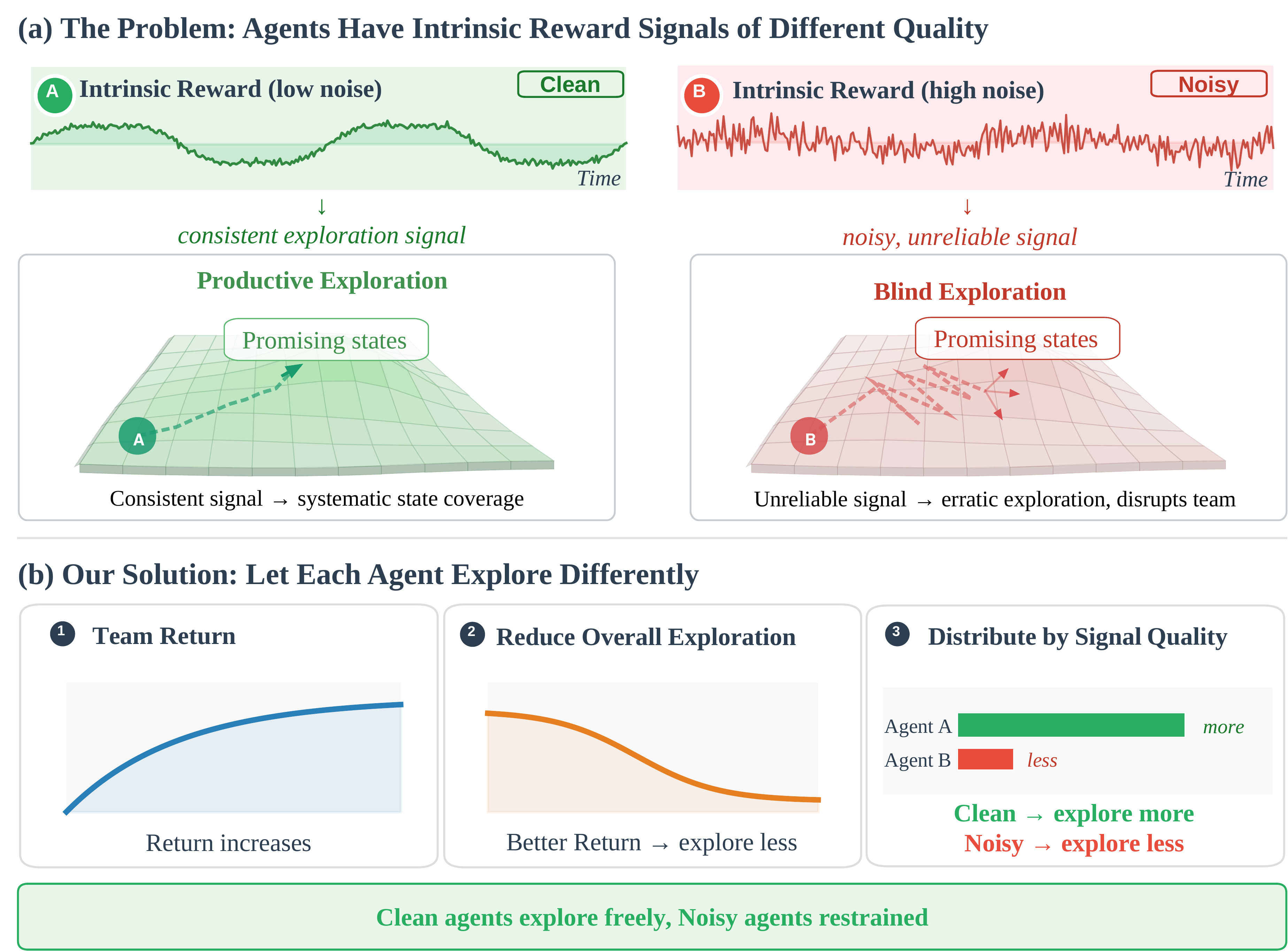}
    \caption{(a) Allocating exploration budget without considering signal quality lets noisy agents destabilize coordination. (b) Our framework adapts globally via RCB and per-agent via RSQ, concentrating the exploration budget on agents with reliable signals, which yields stronger and more stable returns than the tested baselines.}
    \label{fig:overview}
\end{figure*}

A wide variety of exploration-enhancing techniques have been proposed for cooperative MARL~\cite{gronauer2022multi}, including shared latent variables for coordinated behavior~\cite{mahajan2019maven}, constrained optimization of exploration budgets~\cite{chen2022redeeming}, and dual policy heads~\cite{li2023coin}, among many others~\cite{foerster2016learning, li2021celebrating, kim2023adaptive, du2019liir}. Among these diverse approaches, augmenting the task reward with intrinsic motivation bonuses has become the most widely adopted paradigm due to its generality~\cite{pathak2017curiosity, burda2018exploration, jaques2019social, zheng2021episodic}.

In the intrinsic motivation framework, the \emph{extrinsic reward} $r^{\ext}$ (the task-specific signal from the environment, e.g., reaching a goal or capturing a target) is supplemented with an \emph{intrinsic reward} $r_i^{\intr}$ that each agent generates internally to quantify the novelty of its experience, yielding the augmented reward $r_i = r^{\ext} + \beta \cdot r_i^{\intr}$ for each agent $i$. The \emph{exploration intensity} $\beta > 0$ governs how strongly exploration influences learning, and setting it correctly is the central difficulty. Existing methods either fix $\beta$ to a single value that must be tuned per task~\cite{mahajan2019maven, foerster2016learning, li2021celebrating, li2023coin} or adapt per-agent budgets based on task performance constraints~\cite{chen2022redeeming} without considering whether each agent's intrinsic reward is actually reliable. Even when the exploration budget is well-allocated for task feasibility, a subtler problem remains: different agents produce intrinsic rewards of different quality, and an agent whose novelty signal is dominated by noise can destabilize coordination regardless of how much budget it receives. Moreover, detecting these quality differences requires that the intrinsic reward itself produce signals of varying reliability across agents, which depends on how the intrinsic reward is computed (Sec.~\ref{subsec:sd}).

We identify two fundamental challenges that current methods leave unresolved. First, the optimal $\beta$ is task-dependent, and tasks that demand extensive exploration (e.g., navigating physical bottlenecks) require large $\beta$, while tasks where coordination emerges easily need small $\beta$ to avoid overwhelming the task signal. Manually tuning $\beta$ is impractical, especially since small changes can cause catastrophic training failures. Second, within a single task, different agents experience different intrinsic reward dynamics. An agent whose intrinsic reward signal is reliable can explore productively, while an agent with noisy or inconsistent intrinsic motivation risks destabilizing coordination through uncontrolled exploration. As illustrated in Fig.~\ref{fig:overview}(a), allocating exploration budget without accounting for intrinsic reward quality allows agents with noisy exploration signals to disrupt team coordination.

To address these challenges, we propose an integrated exploration framework whose three components each address a distinct requirement (Fig.~\ref{fig:overview}(b)). For global scheduling, we derive a Return-Conditioned $\beta$ (RCB) schedule, sigmoidal in team return, with provable convergence that adapts the exploration intensity to team learning progress. For per-agent allocation, we introduce the Reward Signal Quality (RSQ) metric, a computationally efficient affine modulation based on signal-to-noise ratio that preserves the allocation ordering of the information-optimal solution. For RSQ to differentiate agents, the intrinsic reward must produce signals whose reliability naturally varies across agents. Successor Distance (SD)~\cite{myers2024temporal}, which measures how far apart two states are in terms of the time it takes to travel between them, satisfies this condition. An agent that can explore on its own produces a consistent novelty signal across episodes, while an agent whose experience depends on what teammates happen to do produces a noisy, fluctuating signal. SD naturally reflects this difference, and RSQ detects it automatically. The three components address complementary situations. RCB actively modulates global exploration on tasks with wide return ranges, RSQ prevents training collapse by attenuating noisy agents on large-scale tasks where return variation is narrow, and SD provides the geometrically structured intrinsic signal that makes per-agent quality differentiation possible. Together, they achieve the highest or statistically tied return on all seven benchmarks.

Our contributions are as follows:
\begin{itemize}
\item A \textbf{Return-Conditioned $\beta$ (RCB)} for \\
global exploration intensity control with provable convergence under a contraction condition. RCB adapts $\beta$ to team learning progress, substantially reducing the need for per-task manual tuning (Sec.~\ref{subsec:rcb}).
\item A \textbf{Reward Signal Quality (RSQ) metric} for per-agent exploration budget allocation, motivated by information-theoretic analysis. The signal-to-noise ratio (SNR) emerges as the natural quality metric, and the affine RSQ modulation preserves the allocation ordering of the information-optimal solution. The key design guarantee is that RSQ enables exploration intensities that would otherwise cause training collapse, by selectively attenuating agents with unreliable signals (Sec.~\ref{subsec:rsq}).
\item A \textbf{complete framework} that integrates RCB, RSQ, and Successor Distance (SD) into a unified system. SD's quasimetric structure naturally produces distinguishable per-agent signal quality, a necessary condition for RSQ to differentiate agents. Replacing SD with entropy, RND, or count-based rewards eliminates this differentiation and causes training collapse (Sec.~\ref{subsec:sd}).
\item \textbf{Experiments on seven cooperative benchmarks} (two discrete MPE tasks, two SMAX combat tasks with 8 and 27 agents, three continuous MABrax tasks) with 10 seeds per configuration. Our method achieves top-tier returns across all environments. The source code is available at {\small\url{https://github.com/DH-O/RRS}}.
\end{itemize}

\section{Related Work}
\label{sec:related_work}

\subsection{Exploration in Single-Agent Reinforcement Learning}

Exploration in reinforcement learning has evolved from simple randomization to structured novelty-seeking. Classical approaches such as $\epsilon$-greedy~\cite{sutton2018reinforcement} and optimistic initialization~\cite{machado2015domain} add undirected randomness, while UCB methods~\cite{auer2002ucb} and posterior sampling~\cite{osband2016deep} provide principled uncertainty-driven exploration. In function-approximation settings, count-based methods~\cite{bellemare2016unifying} generalize visitation counts via density models, ICM~\cite{pathak2017curiosity} uses dynamics prediction error as a novelty proxy, RND~\cite{burda2018exploration} measures novelty through distillation error from a random target network, and random curiosity modules~\cite{li2020random} stabilize intrinsic rewards against catastrophic forgetting. Maximum entropy RL~\cite{haarnoja2018soft} encourages broad action distributions by incorporating policy entropy into the objective. These methods are designed for a single learner and do not account for multi-agent coordination, where one agent's exploration can disrupt another's learning~\cite{wang2020influence}.

\subsection{Exploration Paradigms in Cooperative MARL}

Exploration in cooperative MARL has been approached 
\\through diverse paradigms. Latent-variable methods~\cite{mahajan2019maven} condition agents on shared variables to produce diverse team behaviors, episodic memory~\cite{zheng2021episodic} provides joint-state novelty signals, diversity maximization~\cite{li2021celebrating} encourages behavioral variety via mutual information objectives, learned communication protocols~\cite{foerster2016learning, sukhbaatar2016learning, das2019tarmac} reduce partial observability through information sharing, and mutual-information-based teammate modeling~\cite{jiang2024macs} optimizes informative inter-agent communication. 
\\Maximum-entropy methods such as HASAC~\cite{liu2024hasac} encourage exploration via policy entropy regularization rather than reward augmentation, while COIN~\cite{li2023coin} maintains separate exploration and exploitation policy heads to decouple the two objectives.

Among these paradigms, augmenting the task reward with per-agent intrinsic motivation bonuses ($r_i = r^{\ext} + \beta \cdot r_i^{\intr}$) has emerged as the dominant framework due to its generality and compatibility with standard policy gradient architectures. Within this additive framework, various novelty signals have been proposed. Social influence~\cite{jaques2019social} and influence-based exploration~\cite{wang2020influence} reward agents for affecting teammates, while coordinated prediction errors~\cite{iqbal2019coordinated} and individualized intrinsic rewards~\cite{du2019liir} align bonuses with the team objective. Recent extensions include trajectory entropy maximization~\cite{li2025toward}, curiosity calibration via information bottleneck~\cite{cermic2025}, and sequential entropy accumulation~\cite{saeir2024}. Independently, Nimonkar et al.~\cite{nimonkar2025marl} show that contrastive distance functions naturally produce cooperative exploration, supporting the use of quasimetric distances as a basis for multi-agent intrinsic rewards.

Despite this progress, the \emph{form} of exploration (what novelty signal to use) has received far more attention than the \emph{intensity} of exploration (how much to explore). Nearly all additive methods above apply a fixed $\beta$ to every agent throughout training without considering whether each agent's intrinsic reward is reliable, leaving two questions unresolved: how should $\beta$ adapt over the course of learning, and how should the exploration budget be distributed across agents with different intrinsic reward quality? Our work addresses both questions, providing a global schedule that adapts to team learning progress and a per-agent allocation based on intrinsic reward signal quality.

\subsection{Adaptive Exploration and Constrained Approaches}

Adaptive methods tune exploration parameters online. In single-agent RL, meta-gradient methods~\cite{xu2018meta} and self-tuning actor-critics~\cite{zahavy2020self} adjust entropy coefficients automatically, 
\\SASR~\cite{ma2025sasr} adapts reward shaping via success-rate-driven Beta distributions, and maximum entropy exploration methods~\cite{li2020mee} separate target and explorer policies to handle deceptive rewards. In multi-agent settings, ADER~\cite{kim2023adaptive} learns per-agent entropy coefficients via value factorization. ADER is the closest prior work to per-agent exploration budget allocation and shares our motivation of agent-specific adaptation. However, it differs in three key aspects: (i) it modulates policy entropy rather than intrinsic reward intensity, which does not distinguish agents by intrinsic signal quality, (ii) it lacks an explicit optimality criterion for the resulting per-agent allocation, and (iii) it requires QMIX-style value decomposition (factoring the joint Q-value into per-agent components), making it incompatible with the MAPPO/IPPO architecture used by our baselines and most recent cooperative MARL work~\cite{yu2022mappo, rutherford2024jaxmarl}. EIPO~\cite{chen2022redeeming} treats intrinsic rewards as constraints rather than additive bonuses, preventing the exploration signal from overwhelming the task objective, and CIM~\cite{zheng2024cim} extends this line with constrained policy optimization for intrinsic reward coefficient tuning. These constrained approaches address \emph{whether} to explore (safe vs.\ unsafe) but not \emph{how to allocate} the exploration budget across agents whose signal quality varies. Our framework fills this gap. RCB provides global scheduling with provable convergence, while RSQ allocates exploration budgets across agents in proportion to their signal quality.

\subsection{Signal Quality and Information-Theoretic Perspectives}

The signal-to-noise ratio (SNR) is a fundamental measure of information reliability, quantifying how much useful signal a measurement contains relative to background noise~\cite{cover2006elements}. In reinforcement learning, policy gradient estimates are inherently noisy, and their SNR determines how reliably the gradient points toward improvement. Roberts and Tedrake~\cite{roberts2008snr} and Kuba et al.~\cite{kuba2021settling} analyze the SNR and variance of policy gradients as diagnostic tools, and noisy value estimates have been shown to degrade cooperative policies through overestimation~\cite{wu2022subavg}, but these works do not use signal quality to modulate exploration. Han et al.~\cite{han2026nsr} recently show that the noise-to-signal ratio of policy gradients increases sharply near optima, providing independent theoretical support for SNR-based modulation. From an information-theoretic perspective, information-directed sampling~\cite{russo2014learning} connects exploration to mutual information by allocating effort proportional to information gain, and Hsu et al.~\cite{hsu2024randomized} extend provably efficient randomized exploration to cooperative MARL with regret bounds, though without per-agent signal quality differentiation. Our work applies this reasoning at a different level: rather than designing the exploration mechanism, we optimize the \emph{allocation of exploration budget} across agents by modeling each agent's intrinsic reward as a noisy channel and applying SNR analysis to determine which agents can productively use their exploration budget~\cite{goldsmith2005wireless}.

\section{Background}
\label{sec:background}

\subsection{Decentralized Partially Observable MDP}

We model cooperative MARL as a Dec-POMDP~\cite{oliehoek2016decpomdp}, defined by the tuple $(\N,\, \mathcal{S},\, \{\mathcal{A}^i\},\, T,\, R,\, \{\Omega^i\},\, O,\, \gamma)$. Here $\N = \{1, \ldots, n\}$ is the agent set, $\mathcal{S}$ is the state space, $\mathcal{A}^i$ is agent $i$'s action space, $T\colon \mathcal{S} \times \mathcal{A} \to \Delta(\mathcal{S})$ is the transition function, $R\colon \mathcal{S} \times \mathcal{A} \to \R$ is the shared team reward, $\Omega^i$ is agent $i$'s observation space, $O$ is the observation function, and $\gamma \in [0,1)$ is the discount factor. We denote by $x_t^i \in \R^{d_x}$ the local state features extracted from agent $i$'s observation $o_t^i$ (e.g., spatial coordinates or joint velocities), which serve as input to per-agent Successor Distance encoders.

\subsection{Intrinsic Motivation for Exploration}

Intrinsic motivation supplements the task reward with exploration bonuses~\cite{bellemare2016unifying, pathak2017curiosity, burda2018exploration}, yielding the augmented per-agent reward
\begin{equation}
    r_i \;=\; r^{\ext} + \beta \cdot r_i^{\intr},
    \label{eq:intrinsic_reward}
\end{equation}
where $r_i^{\intr}$ is agent $i$'s intrinsic reward. The exploration intensity $\beta$ governs the trade-off between task performance and state-space coverage, and any $\beta > 0$ injects exploration pressure into the policy gradient. The central challenge is setting $\beta$, since too large a value overwhelms the task signal, too small a value yields insufficient coverage, and the optimal setting is both task-dependent and agent-dependent~\cite{chen2022redeeming}.

\subsection{Signal-to-Noise Ratio}

In communication theory~\cite{shannon1948mathematical, goldsmith2005wireless}, the signal-to-noise ratio $\mathrm{SNR} = \mu^2/\sigma^2$ quantifies signal reliability. Adaptive modulation adjusts transmission parameters based on channel SNR: high SNR permits aggressive transmission, low SNR requires conservative modulation~\cite{goldsmith1997variable}. We apply this principle to exploration management: high intrinsic reward signal quality permits aggressive exploration, low quality requires attenuated exploration.

\subsection{Successor Distance and Quasimetric Structure}
\label{subsec:sd_quasimetric}

Successor Distance (SD)~\cite{myers2024temporal} measures temporal reachability between states. Let $p_\gamma^\pi(s_f{=}y \mid s_0{=}x)$ denote the discounted successor measure, defined as $(1{-}\gamma)\sum_{t=0}^\infty \gamma^t \Pr^\pi(s_t{=}y \mid s_0{=}x)$.

\begin{definition}[Successor Distance]
\label{def:sd}
\begin{equation}
d_{\mathrm{SD}}^\pi(x, y) = \log \left(\frac{p_\gamma^\pi(s_f{=}y \mid s_0{=}y)}{p_\gamma^\pi(s_f{=}y \mid s_0{=}x)}\right).
\end{equation}
\end{definition}

SD satisfies the axioms of a \emph{quasimetric}~\cite{myers2024temporal},
\begin{itemize}
    \item identity, $d(x,y)=0 \iff x=y$,
    \item triangle inequality, $d(x,z) \leq d(x,y) + d(y,z)$,
    \item non-negativity, $d(x,y) \geq 0$,
\end{itemize}
while allowing asymmetry $d(x,y) \neq d(y,x)$ that captures directed reachability. This structure implies path-consistent exploration incentives that neither count-based nor RND methods provide.

In multi-agent settings, each agent $i$ maintains an encoder $\phi_i$ operating on local state features $x_t^i \in \R^{d_x}$ extracted from $o_t^i$, rather than the full observation. The intrinsic reward $r_{i,t}^{\intr} = \min_{k < t} d_{\phi_i}(x_k^i, x_t^i)$~\cite{jiang2025etd} encourages visits to states far from the agent's trajectory history.

\section{Methodology}
\label{sec:methodology}

\begin{figure*}[t]
    \centering
    \includegraphics[width=0.95\textwidth]{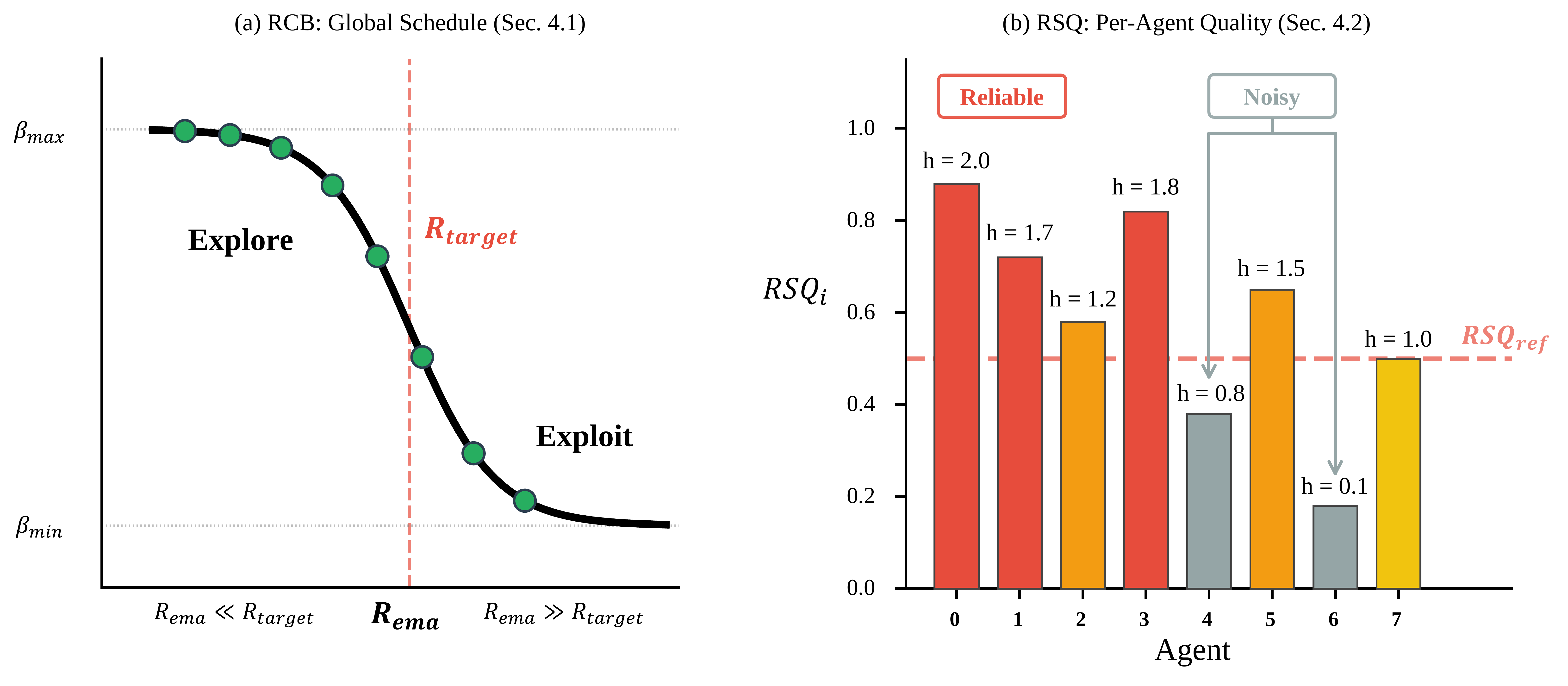}
    \caption{The two scheduling mechanisms of our framework. (a)~RCB adjusts global $\beta^{(k)}$ via a sigmoid of the team's EMA return: high $\beta$ early (explore), decaying as performance improves (exploit). (b)~RSQ measures per-agent intrinsic reward reliability and maps it to modulation weights $h_i$. Agents above $\mathrm{RSQ}_{\mathrm{ref}}$ receive amplified exploration ($h_i > 1$), while noisy agents are suppressed ($h_i < 1$).}
    \label{fig:architecture}
\end{figure*}

\subsection{Return-Conditioned Beta (RCB) Schedule}
\label{subsec:rcb}

A fixed $\beta$ in Eq.~\eqref{eq:intrinsic_reward} struggles in two distinct ways. \emph{Across tasks}, dense-reward environments need only a small bonus to nudge exploration, whereas sparse-reward environments require a much larger one to drive coverage at all, so no single value works for both. \emph{Within a single task}, agents benefit from aggressive exploration early in training when most of the state space is unfamiliar, but the same intensity becomes counterproductive later when the bonus starts to distract from refining a near-optimal policy. An ideal adaptive schedule must therefore satisfy three requirements, (1) the intensity must remain bounded, since unbounded $\beta$ overwhelms the task signal while zero $\beta$ stops exploration entirely, (2) it must decrease as team performance improves, so that agents shift from exploration to exploitation as they learn, and (3) the schedule must converge to a stable equilibrium rather than oscillating. We show that a sigmoid schedule driven by team return satisfies all three with provable convergence under a mild contraction condition.

\subsubsection{Team Return Tracking and Sigmoid Schedule}

Throughout this section we use two exploration parameters that index the current training iteration $k$. The first, $\beta^{(k)} > 0$, is simply the value of $\beta$ used at iteration $k$, a global exploration intensity shared by all agents and updated once per iteration (formally introduced in Definition~\ref{def:rcb} below). The second, $h_i^{(k)} > 0$, is a per-agent multiplicative weight that scales $\beta^{(k)}$ for agent $i$ based on the quality of its intrinsic reward signal at iteration $k$ (formally introduced in Definition~\ref{def:modulation} of Sec.~\ref{subsec:rsq}). The augmented reward in this section can simply be read as $r_i = r^{\ext} + \beta^{(k)} h_i^{(k)} r_i^{\intr}$.

Each training iteration $k$ then consists of collecting a batch of rollouts across $N_{\mathrm{envs}}$ parallel environments ($N_{\mathrm{steps}}$ timesteps each), computing $\beta^{(k)}$ and $h_i^{(k)}$ from the collected data, assembling the augmented reward, and performing a policy update. The exploration parameters are thus determined \emph{after} rollout collection and applied retroactively to the same batch for advantage computation, while action selection during rollout uses only the current policy, not $\beta$.

Let $R^{(k)}$ denote the mean team return over the completed episodes at iteration $k$. Because per-iteration returns fluctuate noisily, especially in sparse-reward tasks, the schedule should react to a slowly varying estimate of training progress rather than the raw $R^{(k)}$. We therefore track return progress via an exponential moving average (EMA), $R_{\mathrm{ema}}^{(k)}$, which collapses the noisy sequence $\{R^{(k)}\}$ into a smooth signal that the sigmoid in Definition~\ref{def:rcb} can act on,
\begin{equation}
    R_{\mathrm{ema}}^{(k+1)} = \alpha_R \cdot R^{(k)} + (1 - \alpha_R) \cdot R_{\mathrm{ema}}^{(k)},
    \label{eq:return_ema}
\end{equation}
where $\alpha_R \in (0,1)$ is the smoothing constant and we initialize $R_{\mathrm{ema}}^{(0)} = 0$. The reference return level $R_{\mathrm{target}}$ centers the sigmoid transition, and its robustness properties are discussed in Remark~\ref{rem:rtarget}.

\begin{definition}[Return-Conditioned Beta]
\label{def:rcb}
The return-conditioned exploration intensity at iteration $k$ is:
\begin{equation}
    \beta^{(k)} = \beta_{\min} + (\beta_{\max} - \beta_{\min}) \cdot \sigma\!\left(\kappa(R_{\mathrm{target}} - R_{\mathrm{ema}}^{(k)})\right),
    \label{eq:rcb}
\end{equation}
where $\sigma(x) = 1/(1+\exp(-x))$ is the logistic sigmoid, $\kappa > 0$ controls transition sharpness, and $\beta_{\min}, \beta_{\max} > 0$ are the intensity floor and ceiling.
\end{definition}

When $R_{\mathrm{ema}} \ll R_{\mathrm{target}}$, $\beta \approx \beta_{\max}$ (explore aggressively), and when $R_{\mathrm{ema}} \gg R_{\mathrm{target}}$, $\beta \approx \beta_{\min}$ (exploit). The sigmoid's bounded range, monotonicity in $R_{\mathrm{ema}}$, and centering at $R_{\mathrm{target}}$ satisfy all three design requirements by construction.

\begin{remark}[Robustness of $R_{\mathrm{target}}$ and $\kappa$]
\label{rem:rtarget}
$R_{\mathrm{target}}$ and $\kappa$ are not precision-sensitive hyperparameters. Because the logistic sigmoid saturates smoothly, $R_{\mathrm{target}}$ only needs to lie within the range of returns achievable during training, and $\kappa$ only needs to be small enough that the explore-to-exploit transition occurs gradually over that range. In practice, both can be set from coarse task-level information (e.g., approximate return scale), analogous to target entropy $-\log|\mathcal{A}|$ in SAC~\cite{haarnoja2018soft}. We provide a formal derivation of the sigmoid's transition bandwidth and an empirical sensitivity sweep in Sec.~\ref{subsec:rtarget_sensitivity}.
\end{remark}

\subsubsection{RCB Convergence}

\begin{assumption}[Return Response]
\label{assum:return_response}
For a fixed exploration intensity $\beta$, suppose the policy is trained long enough to reach approximate equilibrium. Let $\bar{R}(\beta)$ denote the resulting steady-state expected team return. We assume $\bar{R}(\beta)$ is continuous, bounded in $[R_{\min}, R_{\max}]$, and that the observed return at iteration $k$ satisfies $R^{(k)} = \bar{R}(\beta^{(k)}) + \xi^{(k)}$, where $\xi^{(k)}$ is zero-mean noise with bounded variance $\sigma_\xi^2$.
\end{assumption}

This is a standard assumption in adaptive learning rate analysis~\cite{borkar2008stochastic}: the exploration parameter ($\beta$) changes slowly enough relative to the policy that the policy approximately converges before $\beta$ updates again.

The key idea behind the following theorem is that the RCB update forms a feedback loop ($\beta \to \text{return} \to R_{\mathrm{ema}} \to \beta$), and if each step of this loop shrinks differences (a \emph{contraction}), then the loop converges to a unique equilibrium.

\begin{theorem}[RCB Schedule Convergence]
\label{thm:rcb_convergence}
Under Assumption~\ref{assum:return_response}, if the contraction condition
\begin{equation}
    \kappa(\beta_{\max} - \beta_{\min}) \cdot \sup_\beta |\bar{R}'(\beta)| \cdot \tfrac{1}{4} < 1
    \label{eq:contraction}
\end{equation}
holds (the $1/4$ comes from the maximum slope of the logistic sigmoid), then:
\begin{itemize}
    \item[(1)] \textbf{Uniqueness}: the equilibrium $(R^*, \beta^*)$ is unique.
    \item[(2)] \textbf{Deterministic convergence}: the tracking error contracts geometrically,
    \begin{equation}
        |R_{\mathrm{ema}}^{(k)} - R^*| \leq \rho^k\, |R_{\mathrm{ema}}^{(0)} - R^*|, \quad \rho = 1 - \alpha_R(1 - L_\Phi) \in (0,1),
        \label{eq:det_convergence}
    \end{equation}
    where $L_\Phi \triangleq \sup_\beta |\Phi'(\beta)|$ is the Lipschitz constant of the RCB map $\Phi$ (derived in \ref{app:rcb_convergence}).
    \item[(3)] \textbf{Stochastic convergence}: when observed returns are noisy ($\sigma_\xi^2 > 0$), $R_{\mathrm{ema}}$ satisfies
    \begin{equation}
        \E\bigl[|R_{\mathrm{ema}}^{(k)} - R^*|^2\bigr] \;\to\; O\!\left(\frac{\alpha_R \,\sigma_\xi^2}{1 - L_\Phi}\right).
        \label{eq:stoch_convergence}
    \end{equation}
\end{itemize}
\end{theorem}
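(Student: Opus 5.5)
The plan is to reduce the coupled update to a scalar recursion in $R_{\mathrm{ema}}$ alone. Substituting Eq.~\eqref{eq:rcb} into Assumption~\ref{assum:return_response} and then into Eq.~\eqref{eq:return_ema} gives
\begin{equation*}
R_{\mathrm{ema}}^{(k+1)} = (1-\alpha_R)R_{\mathrm{ema}}^{(k)} + \alpha_R\,F\bigl(R_{\mathrm{ema}}^{(k)}\bigr) + \alpha_R\,\xi^{(k)},
\end{equation*}
where $F(R) \triangleq \bar{R}\bigl(B(R)\bigr)$ and $B(R) = \beta_{\min} + (\beta_{\max}-\beta_{\min})\,\sigma(\kappa(R_{\mathrm{target}}-R))$. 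I will identify the RCB map $\Phi$ with this composition: either as $F$ on returns or, equivalently, as $B\circ\bar{R}$ on intensities. The two compositions have the same Lipschitz bound.

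The chain rule gives $|B'(R)| = \kappa(\beta_{\max}-\beta_{\min})\,\sigma'(\cdot) \le \kappa(\beta_{\max}-\beta_{\min})/4$, since $\sigma' = \sigma(1-\sigma) \le 1/4$. Therefore
\begin{equation*}
L_\Phi \le \kappa(\beta_{\max}-\beta_{\min})\cdot\tfrac14\cdot\sup_\beta|\bar{R}'(\beta)|,
\end{equation*}
and this is $<1$ exactly by Eq.~\eqref{eq:contraction}. Here I will assume $\bar{R}$ is differentiable, or at least Lipschitz with $\sup|\bar{R}'|$ read as its Lipschitz constant; this must be added to Assumption~\ref{assum:return_response}, which gives only continuity.

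\textbf{Uniqueness.} $F$ maps $\R$ into the compact interval $[R_{\min},R_{\max}]$ and is a contraction, so Banach's fixed point theorem gives a unique $R^*$ with $F(R^*)=R^*$. Setting $\beta^* = B(R^*)$ makes $(R^*,\beta^*)$ the unique equilibrium. Any other equilibrium pair would yield a second fixed point of $F$.

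\textbf{Deterministic convergence.} With $\xi\equiv 0$, subtract $R^* = (1-\alpha_R)R^* + \alpha_R F(R^*)$ from the recursion and apply the triangle inequality:
\begin{equation*}
|R_{\mathrm{ema}}^{(k+1)}-R^*| \le \bigl(1-\alpha_R+\alpha_R L_\Phi\bigr)\,|R_{\mathrm{ema}}^{(k)}-R^*|.
\end{equation*}
The factor $1-\alpha_R+\alpha_R L_\Phi$ is exactly $\rho$. Since $\alpha_R\in(0,1)$ and $L_\Phi<1$, we get $\rho<1$, and $\rho>0$ holds as well. Induction then gives Eq.~\eqref{eq:det_convergence}.

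\textbf{Stochastic convergence.} Let $e_k = R_{\mathrm{ema}}^{(k)}-R^*$. Then
\begin{equation*}
e_{k+1} = \underbrace{(1-\alpha_R)e_k + \alpha_R\bigl(F(R_{\mathrm{ema}}^{(k)})-F(R^*)\bigr)}_{=:\,g_k} + \alpha_R\xi^{(k)}, \qquad |g_k|\le\rho|e_k|.
\end{equation*}
Squaring and taking expectations, the cross term $\E[g_k\xi^{(k)}]$ vanishes if $\xi^{(k)}$ is zero-mean conditional on the past, i.e.\ a martingale-difference sequence. I will state this strengthening of ``zero-mean'' explicitly. This yields $\E e_{k+1}^2 \le \rho^2\E e_k^2 + \alpha_R^2\sigma_\xi^2$, and hence
\begin{equation*}
\limsup_k \E e_k^2 \le \frac{\alpha_R^2\sigma_\xi^2}{1-\rho^2}.
\end{equation*}
Using $1-\rho^2 \ge 1-\rho = \alpha_R(1-L_\Phi)$, this is at most $\alpha_R\sigma_\xi^2/(1-L_\Phi)$, which matches Eq.~\eqref{eq:stoch_convergence}.

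The main obstacle is this noise step. It needs the conditional-independence structure of $\xi^{(k)}$ relative to $R_{\mathrm{ema}}^{(k)}$. It also needs the interpretation of the ``steady-state'' Assumption~\ref{assum:return_response}, namely that $R^{(k)}$ responds to $\beta^{(k)}$ instantaneously. I will make both explicit as part of the timescale-separation premise rather than try to prove them.
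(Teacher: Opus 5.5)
Your proposal is correct and follows the paper's argument. It uses the chain-rule bound with $\sigma'\le 1/4$, Banach's theorem for uniqueness, the one-step EMA error recursion with factor $\rho = 1-\alpha_R(1-L_\Phi)$, and the squared-error recursion unrolled via $1-\rho^2 \ge 1-\rho = \alpha_R(1-L_\Phi)$. The differences are minor: the paper applies Banach to $\Phi$ on $[\beta_{\min},\beta_{\max}]$ rather than to $\bar{R}\circ B$ on returns, and then chains the Lipschitz constants of $\bar{R}$ and the sigmoid map in the tracking step. Like your ``same Lipschitz bound'' identification, it therefore effectively takes $L_\Phi$ to be the closed-form product $\kappa(\beta_{\max}-\beta_{\min})\sup_\beta|\bar{R}'(\beta)|/4$. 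The paper also uses differentiability of $\bar{R}$ and independence of $\xi^{(k)}$ from the past without stating them, which you rightly make explicit.
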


\begin{proof}
The proof applies Banach's contraction mapping theorem~\cite{granas2003fixed} to the composed map
\[
\textstyle \Phi(\beta) = \beta_{\min} + (\beta_{\max} - \beta_{\min})\,\sigma\!\bigl(\kappa(R_{\mathrm{target}} - \bar{R}(\beta))\bigr),
\]
then bounds the stochastic residual via a linear recursion. See \ref{app:rcb_convergence} for the proof sketch.
\end{proof}

\begin{remark}[What Theorem~\ref{thm:rcb_convergence} delivers]
\label{rem:beyond_banach}
Theorem~\ref{thm:rcb_convergence} differs from a generic contraction-mapping argument in two specific ways. First, the Lipschitz constant is obtained in closed form, $L_\Phi = \kappa(\beta_{\max}-\beta_{\min})\sup_\beta|\bar{R}'(\beta)|/4$, by combining the chain rule with the analytic slope bound $\sigma'(\cdot) \leq 1/4$, rather than left as an abstract symbol. Second, part~(3) is not a fixed-point result but an $L^2$ recursion, in the stochastic-approximation style of~\cite{borkar2008stochastic}, that handles the noisy iteration $R^{(k)} = \bar{R}(\beta^{(k)}) + \xi^{(k)}$. These two features are what allow Theorem~\ref{thm:rcb_convergence} to deliver, beyond a generic Banach citation, a verifiable hyperparameter inequality on $(\kappa, \beta_{\min}, \beta_{\max})$ that the practitioner can check before training and an explicit noise-floor bound $O(\alpha_R\sigma_\xi^2/(1-L_\Phi))$ tying the EMA smoothing constant $\alpha_R$ to the asymptotic tracking error.
\end{remark}

In practice, the contraction condition is easily satisfied. With $\kappa = 0.01$ for all experiments, the left-hand product of Eq.~\eqref{eq:contraction} stays well below 1, and we confirm this empirically in Sec.~\ref{subsec:ablation}.

\subsection{Reward Signal Quality (RSQ)}
\label{subsec:rsq}

RCB provides a global exploration intensity $\beta^{(k)}$, and the remaining question is how to distribute this budget across $n$ different agents. Each agent's intrinsic reward contains both useful signal and noise, much like a communication channel. In communication theory, the classical \emph{water-filling} solution~\cite{cover2006elements, goldsmith2005wireless} distributes transmission power across parallel channels in proportion to their signal quality. We apply the same principle to exploration: agents with reliable intrinsic rewards (high SNR) receive more exploration budget, while noisy agents receive less. This analysis motivates the computationally efficient RSQ metric, a simple affine approximation that preserves the SNR-based ordering of the optimal allocation.

\subsubsection{Multi-Agent Exploration as Information Maximization}

Under uniform allocation, all agents explore with the same intensity $\beta^{(k)}$, regardless of whether their intrinsic reward reliably indicates productive exploration or is dominated by noise. A better strategy concentrates the budget on agents with high-quality signals (Fig.~\ref{fig:architecture}b).

We formalize this intuition by modeling each agent $i$'s intrinsic reward as a noisy observation: a useful signal $f_i$ (e.g., how much novel territory the agent covers) plus zero-mean noise $\epsilon_i$ from stochastic transitions and other agents' actions. The ratio of signal power to noise power defines a signal-to-noise ratio $\mathrm{SNR}_i$ for each agent (we make this precise in Sec.~\ref{subsec:rsq_metric}).

To quantify the useful information in each agent's intrinsic reward, we model each agent as a noisy channel. Among all distributions with a given mean and variance, the Gaussian carries the least mutual information~\cite{cover2006elements}. Any allocation that is optimal under the Gaussian model therefore remains valid (though possibly suboptimal) for the true distribution, giving a conservative lower bound. Under this model, the mutual information between agent $i$'s exploration and its intrinsic reward is:
\begin{equation}
    I_i = \tfrac{1}{2}\log(1 + \beta_i^2 \cdot \mathrm{SNR}_i).
\end{equation}
High $I_i$ means the intrinsic reward reliably distinguishes informative actions from uninformative ones.

The goal is to choose per-agent intensities $\beta_i$ that maximize the total information gain $\sum_i I_i$ subject to a budget constraint. Since the augmented reward is $r_i = r^{\ext} + \beta_i r_i^{\intr}$, the variance of the intrinsic component in the policy gradient scales as $\beta_i^2 \mathrm{Var}(r_i^{\intr})$, making $\beta_i^2$ the natural measure of each agent's exploration impact. The total impact is $\sum_i \beta_i^2 = n(\beta^{(k)})^2 \triangleq B$. Maximizing $\sum_i I_i$ subject to $\sum_i \beta_i^2 \leq B$ is the classical \emph{power allocation} problem, whose solution is the standard water-filling strategy~\cite[Ch.~10]{cover2006elements}. Identifying $p_i = \beta_i^2$ as the allocated power for agent~$i$,
\begin{equation}
    p_i^{*,\mathrm{WF}} = \left(\nu - \frac{1}{\mathrm{SNR}_i}\right)^+, \quad \beta_i^{*,\mathrm{WF}} = \sqrt{p_i^{*,\mathrm{WF}}},
    \label{eq:water_filling}
\end{equation}
where $(x)^+ = \max(x, 0)$ and the water level $\nu$ is chosen so $\sum_i p_i^{*,\mathrm{WF}} = B$. Agents with $1/\mathrm{SNR}_i \geq \nu$ receive zero allocation. We subsequently express the per-agent intensity directly as $\beta_i = \beta \cdot h_i$, so that $p_i = (\beta h_i)^2$.

\subsubsection{Reward Signal Quality: Computing SNR in Practice}
\label{subsec:rsq_metric}

The water-filling allocation~\eqref{eq:water_filling} requires a per-agent $\mathrm{SNR}_i$ to implement. We now make the abstract $\mathrm{SNR}_i$ concrete. At each training iteration, agent $i$ produces $N_{\mathrm{steps}} \times N_{\mathrm{envs}}$ intrinsic reward samples across all timesteps and parallel environments. The batch mean reflects the consistent signal strength, while the batch variance captures unpredictable fluctuations. We track these via exponential moving averages (EMAs), which smooth out iteration-to-iteration noise:
\begin{align}
    \mu_i^{(k+1)} &= \alpha \cdot \bar{r}_i^{\intr,(k)} + (1-\alpha) \cdot \mu_i^{(k)}, \label{eq:ema_mu} \\
    (\sigma_i^{(k+1)})^2 &= \alpha \cdot \widehat{\mathrm{Var}}(r_i^{\intr,(k)}) + (1-\alpha) \cdot (\sigma_i^{(k)})^2, \label{eq:ema_sigma}
\end{align}
where $\bar{r}_i^{\intr,(k)}$ and $\widehat{\mathrm{Var}}(r_i^{\intr,(k)})$ are the batch mean and variance of agent $i$'s intrinsic reward at iteration $k$. The Gaussian assumption is justified by the central limit theorem. The signal-to-noise ratio is then $\mathrm{SNR}_i = \mu_i^2 / \sigma_i^2$, following the standard power decomposition $\E[X^2] = \mu_X^2 + \sigma_X^2$~\cite{cover2006elements}.

\begin{definition}[Reward Signal Quality]
\label{def:rsq}
For agent $i$ at iteration $k$:
\begin{equation}
    \rsq_i^{(k)} = \frac{(\mu_i^{(k)})^2}{(\mu_i^{(k)})^2 + (\sigma_i^{(k)})^2 + \epsilon},
    \label{eq:rsq}
\end{equation}
where we apply the transform $x \mapsto x/(1+x)$ to $\mathrm{SNR}_i$, mapping it to $[0,1]$ so that $\operatorname{RSQ}=1$ means a perfect signal. We set $\epsilon = 10^{-8}$ for numerical stability.
\end{definition}

By construction, RSQ is invariant to positive rescaling of intrinsic rewards, eliminating per-task calibration.

\subsubsection{Per-Agent Exploration Intensity via Affine Modulation}

\begin{definition}[Affine RSQ Modulation]
\label{def:modulation}
The per-agent exploration intensity combines global RCB with individual RSQ:
\begin{equation}
    \beta_i^{(k)} = \beta^{(k)} \cdot h(\rsq_i^{(k)}),
    \label{eq:two_level}
\end{equation}
where the modulation function $h: [0,1] \to \R_{>0}$ is defined as:
\begin{equation}
    h(\rsq_i) = \mathrm{clip}\!\left(1 + \lambda(\rsq_i - \rsq_{\mathrm{ref}}),\; h_{\min},\; h_{\max}\right),
    \label{eq:affine_mod}
\end{equation}
with $\lambda > 0$ controlling sensitivity, $\rsq_{\mathrm{ref}}$ the reference RSQ level, and $[h_{\min}, h_{\max}]$ bounding the modulation range.
\end{definition}

The modulation function has three regimes: agents with RSQ above $\rsq_{\mathrm{ref}}$ receive amplified exploration ($h > 1$), those at the reference level are unchanged ($h = 1$), and those with low RSQ are suppressed ($h < 1$). The clipping bounds $[h_{\min}, h_{\max}]$ prevent extreme amplification or complete suppression. When all agents have similar intrinsic reward quality, $h_i \approx 1$ for all, recovering pure RCB. The specific parameter values are listed in Table~\ref{tab:hyperparams}.

\begin{remark}[Default Values for $\rsq_{\mathrm{ref}}$, $h_{\min}$, and $h_{\max}$]
\label{rem:modulation_defaults}
These three parameters have natural default values that require less per-task tuning than the remaining hyperparameters. The reference level $\rsq_{\mathrm{ref}} = 0.5$ corresponds to $\mathrm{SNR} = 1$ (equal signal and noise power), the natural boundary between ``reliable'' and ``noisy'' agents. The upper bound $h_{\max} = 2.0$ caps amplification at $2\times$ the global intensity, preventing any single agent from dominating the exploration budget. The lower bound $h_{\min} = 0.1$ ensures that even the noisiest agents retain at least $10\%$ of the global intensity, avoiding the irreversible suppression problem that afflicts exact water-filling (\ref{app:wf_comparison}). These values are shared across all seven environments (Table~\ref{tab:hyperparams}).
\end{remark}

\subsubsection{Why Affine Modulation Instead of Exact Water-Filling}

The water-filling solution~\eqref{eq:water_filling} is information-theoretically optimal but has a practical weakness: it assigns $\beta_i = 0$ to agents whose estimated SNR falls below the water level, permanently suppressing their exploration based on potentially noisy EMA estimates. The affine RSQ modulation (Definition~\ref{def:modulation}) replaces exact water-filling with a single-line formula that clips allocations to $[h_{\min}, h_{\max}]$, so that every agent retains a minimum exploration intensity. The following proposition confirms that this simple approximation allocates in exactly the same order as the optimal solution: agents with higher SNR always receive more exploration budget. \ref{app:wf_comparison} validates this design choice empirically: exact water-filling causes training collapse on corridor because noisy SNR estimates permanently suppress agents, while the affine modulation achieves stable coordination.

\begin{proposition}[Allocation Ordering Preservation]
\label{thm:ordering}
Let $n$ agents have signal-to-noise ratios $\mathrm{SNR}_1 \geq \mathrm{SNR}_2 \geq \cdots \geq \mathrm{SNR}_n > 0$. Then:
\begin{enumerate}
    \item[(i)] The water-filling allocation satisfies $p_1^{\mathrm{WF}} \geq p_2^{\mathrm{WF}} \geq \cdots \geq p_n^{\mathrm{WF}}$.
    \item[(ii)] The affine RSQ modulation satisfies $h_1 \geq h_2 \geq \cdots \geq h_n$, yielding per-agent intensities $\beta_1 \geq \beta_2 \geq \cdots \geq \beta_n$.
    \item[(iii)] For any pair $(i,j)$ with $\mathrm{SNR}_i > \mathrm{SNR}_j$, if both modulation weights are unsaturated ($h_{\min} < h_j \leq h_i < h_{\max}$), then the ordering is strict: $h_i > h_j$.
\end{enumerate}
\end{proposition}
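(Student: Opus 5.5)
The whole argument rests on one observation: every map in the pipeline from $\mathrm{SNR}_i$ to $p_i$ or to $h_i$ is monotone non-decreasing. Composing them then transports the ordering $\mathrm{SNR}_1 \geq \cdots \geq \mathrm{SNR}_n$ to the allocations. I would handle the three parts separately, using only Eq.~\eqref{eq:water_filling}, Definition~\ref{def:rsq} and Definition~\ref{def:modulation}.

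\textbf{Part (i).} Fix the water level $\nu$ determined by the budget constraint $\sum_i p_i^{*,\mathrm{WF}} = B$. The key point is that $\nu$ is a single scalar shared by all agents, so it does not depend on $i$. Given that, the argument is direct:
\begin{itemize}
\item since $x \mapsto 1/x$ is strictly decreasing on $(0,\infty)$, $\mathrm{SNR}_i \geq \mathrm{SNR}_j$ gives $\nu - 1/\mathrm{SNR}_i \geq \nu - 1/\mathrm{SNR}_j$;
\item $(\cdot)^+ = \max(\cdot,0)$ is non-decreasing, so $p_i^{\mathrm{WF}} \geq p_j^{\mathrm{WF}}$;
\item $\sqrt{\cdot}$ is increasing, so the same ordering holds for $\beta_i^{*,\mathrm{WF}}$.
\end{itemize}
Chaining over consecutive indices gives the full chain.

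\textbf{Part (ii).} The one step needing care is linking $\mathrm{SNR}_i$ to $\rsq_i$. Divide numerator and denominator of Eq.~\eqref{eq:rsq} by $\sigma_i^2$ to get $\rsq_i = \mathrm{SNR}_i/(1+\mathrm{SNR}_i+\epsilon/\sigma_i^2)$. With $\epsilon=0$ this is $g(x)=x/(1+x)$, and $g'(x)=1/(1+x)^2>0$, so $g$ is strictly increasing. With $\epsilon>0$ the term $\epsilon/\sigma_i^2$ varies with the agent, so strict monotonicity in SNR alone could fail. I would handle this by stating the proposition for the idealized $\epsilon\to 0$ form that Definition~\ref{def:rsq} itself describes as ``the transform $x\mapsto x/(1+x)$'', noting that $\epsilon=10^{-8}$ only affects degenerate cases.

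Given the ordering of $\rsq_i$, the rest follows from monotonicity of the modulation map:
\begin{itemize}
\item the affine map $u \mapsto 1+\lambda(u-\rsq_{\mathrm{ref}})$ with $\lambda>0$ is strictly increasing;
\item $\mathrm{clip}(\cdot,h_{\min},h_{\max})$ is non-decreasing;
\item hence $h_1\geq\cdots\geq h_n$, and multiplying by the common factor $\beta^{(k)}>0$ gives $\beta_1\geq\cdots\geq\beta_n$ via Eq.~\eqref{eq:two_level}.
\end{itemize}

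\textbf{Part (iii).} Suppose $\mathrm{SNR}_i>\mathrm{SNR}_j$. Strict monotonicity of $g$ gives $\rsq_i>\rsq_j$, and hence the unclipped values satisfy $\tilde h_i>\tilde h_j$, the gap being $\lambda(\rsq_i-\rsq_j)>0$. On the open interval $(h_{\min},h_{\max})$ the clip is the identity. If both weights are unsaturated, then $h_i=\tilde h_i$ and $h_j=\tilde h_j$, so $h_i>h_j$.

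One subtlety in (iii): the hypothesis as written, $h_{\min}<h_j\leq h_i<h_{\max}$, already presumes that $h_i,h_j$ are the unclipped values. The argument should be phrased as ``interior points of the clip coincide with the affine values'', so that it is not circular.

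Overall the proof is elementary. The only genuine obstacle is the $\epsilon$-regularization in part (ii), which I would resolve as above by treating $\rsq$ as $\mathrm{SNR}/(1+\mathrm{SNR})$.
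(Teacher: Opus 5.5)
Your proof is correct and matches the paper's argument. Strict monotonicity of $x\mapsto x/(1+x)$ composed with the non-decreasing clipped affine map gives (ii), the shared water level $\nu$ together with monotonicity of $1/\mathrm{SNR}$ gives (i), and on the unsaturated region $h_i-h_j=\lambda(\rsq_i-\rsq_j)>0$ gives (iii).

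Your $\epsilon$ remark is a real point that the paper glosses over, since it simply writes $\rsq=\mathrm{SNR}/(1+\mathrm{SNR})$. With $\epsilon>0$, the term $\epsilon/\sigma_i^2$ can reverse the ordering for an agent with very small $\sigma_i^2$, so stating the result for the idealized form is the right fix.
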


\begin{proof}[Proof sketch]
The composition of the strictly increasing $\rsq(\mathrm{SNR}) = \mathrm{SNR}/(1+\mathrm{SNR})$ with the non-decreasing clipped affine $h$ is non-decreasing in $\mathrm{SNR}$, with strict increase whenever $h_i, h_j \in (h_{\min}, h_{\max})$. Part~(i) inherits the same monotonicity from~\eqref{eq:water_filling}, since $1/\mathrm{SNR}_i$ is decreasing in $\mathrm{SNR}_i$.
\end{proof}

Empirically, the affine modulation outperforms exact water-filling on corridor with substantially lower variance and higher mean return (Table~\ref{tab:wf_comparison}). See \ref{app:wf_comparison} for the full comparison.

\begin{algorithm}[t]
\caption{Our framework: RCB schedule and RSQ modulation over per-agent Successor Distance}
\label{alg:framework}
\begin{algorithmic}[1]
\REQUIRE Policies $\{\pi^i\}$, critic $V$, SD encoders $\{\phi_i\}$
\REQUIRE $\beta_{\min}, \beta_{\max}, \kappa, \alpha_R, \lambda, \rsq_{\mathrm{ref}}, h_{\min}, h_{\max}$
\STATE Initialize: $R_{\mathrm{ema}} \gets 0$, $\mu_i \gets 0$, $\sigma_i^2 \gets 1$
\WHILE{not converged}
    \STATE Sample rollouts across $E$ parallel environments
    \STATE $R^{(k)} \gets$ mean team return from completed episodes
    \STATE \textbf{// Phase 1: Global RCB Schedule}
    \STATE $R_{\mathrm{ema}} \gets \alpha_R R^{(k)} + (1-\alpha_R) R_{\mathrm{ema}}$
    \STATE $\beta \gets \beta_{\min} + (\beta_{\max}-\beta_{\min})\sigma(\kappa(R_{\mathrm{target}} - R_{\mathrm{ema}}))$
    \STATE \textbf{// Phase 2: Per-Agent RSQ Modulation}
    \STATE Compute $r_{i,t}^{\intr}$ via SD for each agent $i$
    \FOR{each agent $i$}
        \STATE Update $\mu_i, \sigma_i^2$ from $r_i^{\intr}$ via EMA~\eqref{eq:ema_mu}--\eqref{eq:ema_sigma}
        \STATE $\rsq_i \gets \mu_i^2/(\mu_i^2 + \sigma_i^2 + \epsilon)$
        \STATE $h_i \gets \mathrm{clip}(1 + \lambda(\rsq_i - \rsq_{\mathrm{ref}}), h_{\min}, h_{\max})$
    \ENDFOR
    \STATE \textbf{// Phase 3: Policy Update}
    \STATE $r_{i,t} \gets r_t^{\ext} + \beta \cdot h_i \cdot r_{i,t}^{\intr}$
    \STATE Compute advantages $\hat{A}_{i,t}$ via GAE, update via PPO
    \STATE \textbf{// Phase 4: Update SD Encoders}
    \STATE Update SD encoders $\{\phi_i\}$ via contrastive loss
\ENDWHILE
\end{algorithmic}
\end{algorithm}

\subsection{Successor Distance under RSQ Modulation}
\label{subsec:sd}

Following~\cite{myers2024temporal}, we use per-agent Successor Distance (SD) for intrinsic motivation. Each agent $i$ maintains an encoder $\phi_i$ trained via symmetric InfoNCE~\cite{oord2018representation} on local state features $x_t^i \in \R^{d_x}$ (e.g., spatial coordinates for grid worlds, joint velocities for continuous control). The intrinsic reward $r_{i,t}^{\intr} = \min_{k < t}\, d_{\phi_i}(x_k^i, x_t^i)$~\cite{jiang2025etd} encourages visits to states far from the agent's trajectory history.

We use per-agent SD rather than joint SD because it avoids the exponential sample complexity of learning distances in the joint state space~\cite{poole2019variational}, and because SD produces intrinsic rewards that differ substantially across agents depending on their spatial roles. An agent exploring open space consistently receives large SD values (high $\mu_i$, low $\sigma_i$, high RSQ), while an agent stuck near obstacles receives small, erratic values (low $\mu_i$, high $\sigma_i$, low RSQ). This per-agent variation is what enables RSQ to assign meaningfully different $h_i$ values (Assumption~\ref{assum:quality_variation}).

\begin{assumption}[Per-Agent Signal Quality Variation]
\label{assum:quality_variation}
There exists a minimum pairwise RSQ gap $\Delta_{\min} > 0$ such that, for at least one pair of agents $(i, j)$, $|\rsq_i - \rsq_j| \geq \Delta_{\min}$ holds during the active exploration phase. That is, the intrinsic reward method produces distinguishable per-agent signal quality.
\end{assumption}

When $\Delta_{\min} = 0$, all modulation weights collapse to $h_i \approx 1$ and the method reduces to RCB-only.

\begin{insight}[SD Satisfies Assumption~\ref{assum:quality_variation} Empirically]
\label{insight:quasimetric}
On corridor, SD produces inter-agent RSQ gaps exceeding $0.3$ (Fig.~\ref{fig:rsq_dynamics}). Replacing SD with entropy, RND, or count-based rewards yields near-uniform RSQ ($h_i \approx 1$), violating Assumption~\ref{assum:quality_variation} and yielding strongly negative mean return (\ref{app:intrinsic_source}, Table~\ref{tab:intrinsic_source}).
\end{insight}

\subsection{Complete Training Algorithm}
\label{subsec:algorithm}

Algorithm~\ref{alg:framework} presents the unified framework.
The EMA initialization ($\mu_i = 0$, $\sigma_i^2 = 1$) provides a natural warmup. Initially $\rsq_i \approx 0$ for all agents, yielding $h_i = h_{\min}$, so the effective per-agent intensity is $\beta_{\max} \cdot h_{\min} = 0.05$ regardless of the global $\beta$. As intrinsic reward statistics accumulate over the first few hundred updates, RSQ values rise and differentiation emerges, gradually increasing the exploration budget toward its intended level.

Six of ten framework hyperparameters are shared across all environments. The remaining four are adapted per domain from 2--3 candidates on a single seed. Table~\ref{tab:hyperparams} lists all values.

\begin{table}[!t]
    \centering
    \caption{Hyperparameters of our framework per environment. Four parameters ($\alpha_R$, $\alpha$, $\rsq_{\mathrm{ref}}$, $h_{\max}$) are shared across all environments. The domain-adapted parameters were selected as follows: $R_{\mathrm{target}}$ is set to a reasonable task return (robust over $20\times$ range, Table~\ref{tab:sensitivity_rtarget}), and the remaining parameters were chosen from 2--3 candidates on a single seed per environment.}
    \label{tab:hyperparams}
    \footnotesize
    \setlength{\tabcolsep}{3.5pt}
    \begin{tabular}{lccccccc}
        \toprule
        Parameter & Corridor & Tag & 3s5z & 27m & Ant & Ant-ball & HC \\
        \midrule
        \multicolumn{8}{l}{\textit{Shared across all environments}} \\
        $\alpha_R$ (return EMA) & \multicolumn{7}{c}{$0.03$} \\
        $\alpha$ (RSQ EMA) & \multicolumn{7}{c}{$0.1$} \\
        $\rsq_{\mathrm{ref}}$ & \multicolumn{7}{c}{$0.5$} \\
        $h_{\max}$ & \multicolumn{7}{c}{$2.0$} \\
        \midrule
        \multicolumn{8}{l}{\textit{Adapted per domain}} \\
        $\beta_{\max}$ & $0.5$ & $0.5$ & $0.2$ & $0.5$ & $0.5$ & $0.5$ & $0.5$ \\
        $\kappa$ & $0.01$ & $0.015$ & $0.01$ & $0.01$ & $0.01$ & $0.01$ & $0.01$ \\
        $\beta_{\min}$ & $0.1$ & $0.15$ & $0.05$ & $0.3$ & $0.05$ & $0.05$ & $0.05$ \\
        $R_{\mathrm{target}}$ & $400$ & $400$ & $1.5$ & $3.0$ & $500$ & $500$ & $500$ \\
        $\lambda$ & $3.0$ & $2.0$ & $2.0$ & $5.0$ & $2.0$ & $2.0$ & $2.0$ \\
        $h_{\min}$ & $0.1$ & $0.1$ & $0.8$ & $0.1$ & $0.1$ & $0.1$ & $0.1$ \\
        \bottomrule
    \end{tabular}
\end{table}

\section{Experiments}
\label{sec:experiments}

We evaluate our method on seven cooperative benchmarks: two MPE tasks~\cite{lowe2017maddpg}, two SMAX combat tasks, and three MABrax locomotion tasks, all implemented in JaxMARL~\cite{rutherford2024jaxmarl}.

\subsection{Setup}

\subsubsection{Environments}

\begin{figure*}[!t]
    \centering
    \includegraphics[width=\textwidth]{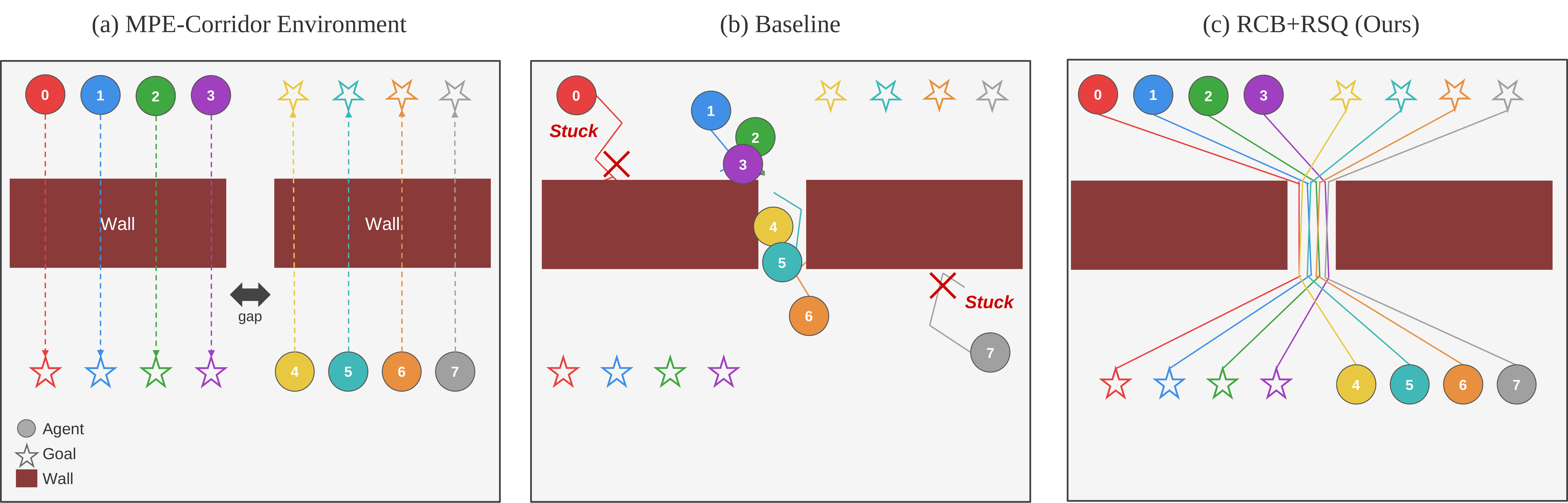}
    \caption{MPE-corridor environment. (a) Environment layout: 8 agents (4 top, 4 bottom) must navigate through a narrow bottleneck (width 0.8) to reach goals on the opposite side. (b) Baseline methods that do not account for per-agent signal quality fail to coordinate, with edge agents getting stuck while center agents cause congestion at the bottleneck. (c) Our method achieves coordinated passage by attenuating noisy agents' exploration and adapting the global exploration intensity to team progress.}
    \label{fig:corridor_env}
\end{figure*}

Figs.~\ref{fig:corridor_env} and~\ref{fig:env_showcase} together illustrate the seven evaluation environments. We use two \textbf{MPE} tasks~\cite{lowe2017maddpg}, namely \emph{corridor} (8 agents, narrow bottleneck, Fig.~\ref{fig:corridor_env}) and \emph{tag} (6 predators vs.\ 2 scripted prey, Fig.~\ref{fig:env_showcase}a). Two \textbf{SMAX} tasks~\cite{rutherford2024jaxmarl}, a JAX reimplementation of SMAC~\cite{samvelyan2019starcraft} with simplified combat mechanics: \emph{3s5z} (8 agents vs.\ 9 opponents, heterogeneous unit types) and \emph{27m} (27 marines vs.\ 30 opponents). SMAX returns are not directly comparable to SMAC. Following recent evaluations~\cite{mahjoub2025sable, tessera2025hypermarl}, we report mean episode return rather than binary win rate. Three \textbf{MABrax}~\cite{rutherford2024jaxmarl,freeman2021brax} continuous locomotion tasks: \emph{ant\_4x2} (4-segment ant, each segment is an agent with 2 actuators), \emph{ant\_ball} (same ant carrying a ball on a tray with partial observability adapted from~\cite{jang2025eigensafe}), and \emph{halfcheetah\_6x1} (6-segment cheetah, 1 actuator each). On ant\_ball, the forward reward is amplified ($3\times$ weight) and each agent observes only a neighbor's joint velocities in a directed ring (24 of 46 dimensions), making coordinated gaits essential. Environment details (observation spaces, reward structure, episode lengths) are in \ref{app:implementation}.

\begin{figure*}[!t]
    \centering
    \includegraphics[width=\textwidth]{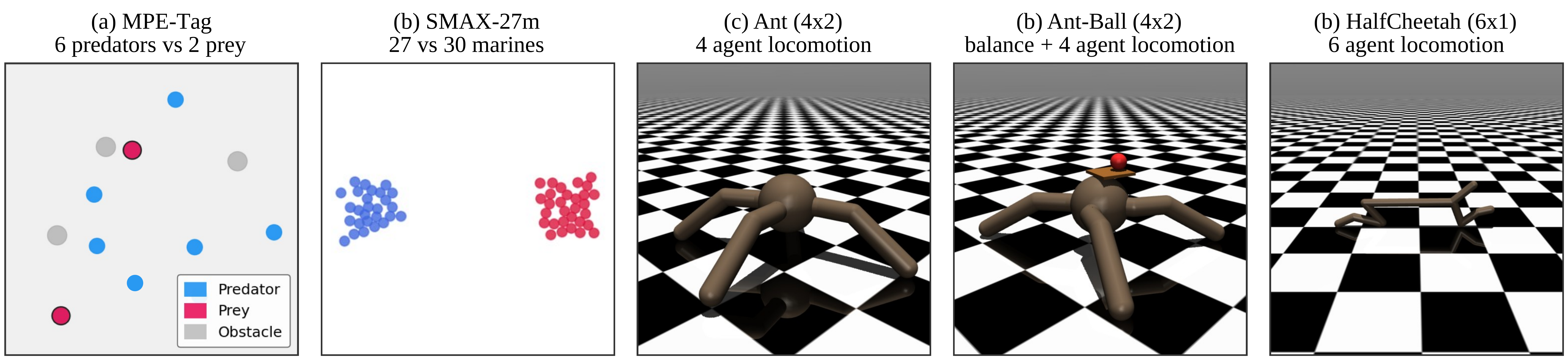}
    \caption{Evaluation environments (excluding corridor, shown in Fig.~\ref{fig:corridor_env}). (a)~MPE-Tag: 6 predators (blue) chase 2 scripted prey (red). (b)~SMAX-27m: 27 allied marines (blue) vs.\ 30 enemies (red). (c)~MABrax Ant (\emph{ant\_4x2}): 4 segment ant with 2 actuators per segment, where each segment is an independent agent. (d)~MABrax Ant-Ball: same ant carrying a ball (red) on a tray while walking forward. (e)~MABrax HalfCheetah (\emph{halfcheetah\_6x1}): 6 segment locomotion with 1 actuator per segment.}
    \label{fig:env_showcase}
\end{figure*}

These environments span a range of coordination challenges. Corridor stresses spatial coordination under physical constraints, tag requires strategic role assignment, the two SMAX tasks test combat coordination under numerical disadvantage at different scales (8 and 27 agents), and the MABrax tasks test continuous multi-agent locomotion with varying degrees of agent heterogeneity.

\subsubsection{Baselines}

We compare against eight methods: \textbf{IPPO}~\cite{dewitt2020independent} (independent PPO without parameter sharing, MABrax only), \textbf{MAPPO}~\cite{yu2022mappo} (MAPPO without intrinsic reward, MPE and SMAX), \textbf{Linear} (fixed $\beta = 0.1$, tuned), \textbf{COIN}~\cite{li2023coin} (counterfactual intrinsic motivation), \textbf{Lagrangian} (Lagrange multiplier $\lambda$ balances intrinsic and extrinsic advantages as $A^{\intr} + \lambda A^{\ext}$, with $\lambda$ updated via dual gradient descent on mean extrinsic advantage), \textbf{MAVEN}~\cite{mahajan2019maven} (multi-agent variational exploration with latent space coordination), \textbf{RCB-only} (our RCB schedule without RSQ, i.e., $\lambda = 0$), and \textbf{RSQ-only} (per-agent RSQ modulation without return-conditioned adaptation).

\subsubsection{Implementation}
\label{subsec:implementation}

For MPE and SMAX, all methods use MAPPO~\cite{yu2022mappo} with GRU-based actors and a shared critic. For MABrax, all methods use IPPO~\cite{dewitt2020independent} with per-agent feed-forward networks and no shared critic. These choices follow standard practice for each domain~\cite{yu2022mappo, rutherford2024jaxmarl}. Intrinsic rewards are computed via per-agent Successor Distance~\cite{myers2024temporal} using a Metric Residual Network encoder. The local state features $x_t^i$ used for SD differ by domain: spatial coordinates $(x, y)$ for MPE ($d_x = 2$), (health, $x$-position, $y$-position, cooldown) for SMAX ($d_x = 4$), and agent-specific features for MABrax ($d_x = 2$ for ant, $d_x = 10$ for ant-ball, $d_x = 3$ for halfcheetah), following the per-agent factorization in Sec.~\ref{subsec:sd}. Full training hyperparameters, SD encoder configuration, and per-domain details are provided in \ref{app:implementation}. All experiments use 10 random seeds per method on NVIDIA RTX A5000 GPUs. The framework is applied identically across all settings.

\subsection{Main Results}
\label{subsec:exp_baselines}

\begin{figure*}[t]
    \centering
    \begin{subfigure}[t]{0.48\textwidth}
        \centering
        \includegraphics[width=\textwidth]{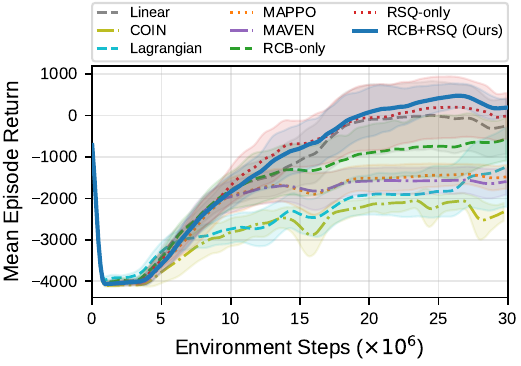}
        \captionsetup{margin={8mm,0mm}}
        \caption{MPE-corridor (8 agents)}
        \label{fig:lc_corridor}
    \end{subfigure}
    \hfill
    \begin{subfigure}[t]{0.48\textwidth}
        \centering
        \includegraphics[width=\textwidth]{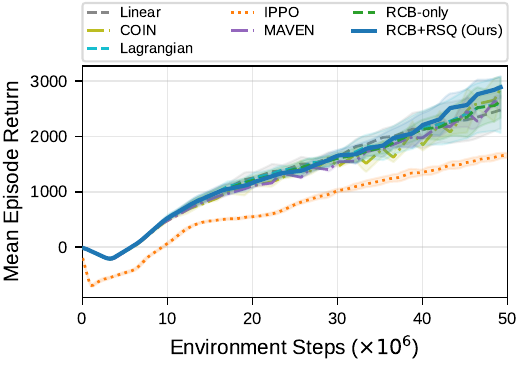}
        \captionsetup{margin={8mm,0mm}}
        \caption{MABrax ant\_4x2 (4 agents, 2 actuators)}
        \label{fig:lc_ant}
    \end{subfigure}
    \caption{Learning curves on representative discrete (left) and continuous (right) multi-agent tasks. Shaded regions indicate $\pm 1$ standard deviation across seeds. On corridor, our method is the only tested method to achieve positive final returns. On ant\_4x2, our method achieves the highest final return among all baselines.}
    \label{fig:learning_curves}
\end{figure*}

\begin{table*}[t]
    \centering
    \caption{Mean episode return ($\pm$ std) across seven environments. \textbf{Bold} marks the highest mean and $^\dagger$ marks a statistical tie with the best ($p > 0.05$, Welch's $t$-test). All methods share the same backbone per domain, namely MAPPO for MPE/SMAX and IPPO for MABrax (Sec.~\ref{subsec:implementation}). Linear uses a fixed exploration weight $\beta = 0.1$. Dashes indicate methods not applicable to that domain. 10 seeds per method.}
    \label{tab:all_results}
    \footnotesize
    \setlength{\tabcolsep}{3.0pt}
    \begin{tabular}{l cc cc ccc}
        \toprule
        & \multicolumn{2}{c}{\textit{MPE (discrete)}} & \multicolumn{2}{c}{\textit{SMAX}} & \multicolumn{3}{c}{\textit{MABrax (continuous)}} \\
        \cmidrule(lr){2-3} \cmidrule(lr){4-5} \cmidrule(lr){6-8}
        Method & Corridor & Tag & 3s5z & 27m & ant\_4x2 & ant\_ball & hc\_6x1 \\
        \midrule
        IPPO & --- & --- & --- & --- & $1659^{(\pm 41)}$ & $830^{(\pm 158)}$ & $2720^{(\pm 114)}$ \\
        MAPPO & $-1477^{(\pm 288)}$ & $+3622^{(\pm 565)}$ & $0.542^{(\pm 0.009)}$ & $0.430^{(\pm 0.015)}$ & --- & --- & --- \\
        RSQ-only & $-20^{(\pm 554)\dagger}$ & $+3823^{(\pm 714)}$ & $0.476^{(\pm 0.033)}$ & $0.448^{(\pm 0.007)\dagger}$ & $2577^{(\pm 368)}$ & $1342^{(\pm 159)}$ & $2589^{(\pm 170)}$ \\
        COIN & $-2729^{(\pm 388)}$ & $+216^{(\pm 663)}$ & $0.193^{(\pm 0.017)}$ & $0.120^{(\pm 0.034)}$ & $2724^{(\pm 427)\dagger}$ & $1702^{(\pm 173)\dagger}$ & $1678^{(\pm 476)}$ \\
        MAVEN & $-1584^{(\pm 432)}$ & $+2329^{(\pm 374)}$ & $0.450^{(\pm 0.094)}$ & $0.238^{(\pm 0.039)}$ & $2282^{(\pm 270)}$ & $1498^{(\pm 128)}$ & $2398^{(\pm 103)}$ \\
        Lagrangian & $-1238^{(\pm 1019)}$ & $+2400^{(\pm 180)}$ & $0.200^{(\pm 0.056)}$ & $0.258^{(\pm 0.065)}$ & $2623^{(\pm 427)\dagger}$ & $1697^{(\pm 292)\dagger}$ & $\mathbf{2874}^{(\pm 141)}$ \\
        RCB-only & $-552^{(\pm 556)}$ & $+3392^{(\pm 584)}$ & $0.582^{(\pm 0.019)\dagger}$ & $0.141^{(\pm 0.021)}$ & $2583^{(\pm 351)}$ & $1690^{(\pm 231)\dagger}$ & $2762^{(\pm 168)\dagger}$ \\
        Linear & $-259^{(\pm 708)\dagger}$ & $+3542^{(\pm 410)}$ & $0.585^{(\pm 0.013)\dagger}$ & $\mathbf{0.449}^{(\pm 0.003)}$ & $2509^{(\pm 341)}$ & $1540^{(\pm 182)}$ & $2746^{(\pm 114)}$ \\
        \textbf{Ours} & $\mathbf{+190}^{(\pm 224)}$ & $\mathbf{+4930}^{(\pm 404)}$ & $\mathbf{0.591}^{(\pm 0.014)}$ & $0.447^{(\pm 0.005)\dagger}$ & $\mathbf{2940}^{(\pm 312)}$ & $\mathbf{1846}^{(\pm 261)}$ & $2870^{(\pm 113)\dagger}$ \\
        \bottomrule
    \end{tabular}
    \smallskip
\end{table*}

We assess pairwise significance using Welch's $t$-test~\cite{welch1947generalization} with $p < 0.05$ as the threshold, marking statistical ties ($p > 0.05$) with $\dagger$ in Table~\ref{tab:all_results}.

\subsubsection{Discrete Control (MPE)}

On MPE-corridor, which demands tight spatial coordination, our method achieves the highest mean return with the lowest variance (Table~\ref{tab:all_results}). The primary distinction over Linear (second-best) is \emph{reliability}, with our method achieving $3.2\times$~lower standard deviation. Every other exploration method ends with strongly negative mean return on this task.

On MPE-tag (Fig.~\ref{fig:lc_tag}), our method achieves the highest mean return (Table~\ref{tab:all_results}). RSQ-only and Linear are the closest competitors but trail our method in mean return while exhibiting larger run-to-run variance.

\subsubsection{Large-Scale Coordination (SMAX)}

\begin{figure*}[t]
    \centering
    \begin{subfigure}[t]{0.48\textwidth}
        \centering
        \includegraphics[width=\textwidth]{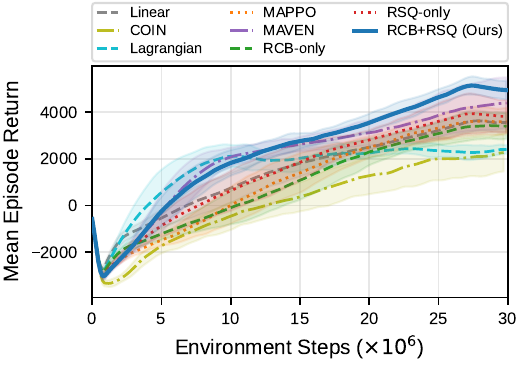}
        \caption{MPE-tag (6 predators, 2 prey)}
        \label{fig:lc_tag}
    \end{subfigure}
    \hfill
    \begin{subfigure}[t]{0.48\textwidth}
        \centering
        \includegraphics[width=\textwidth]{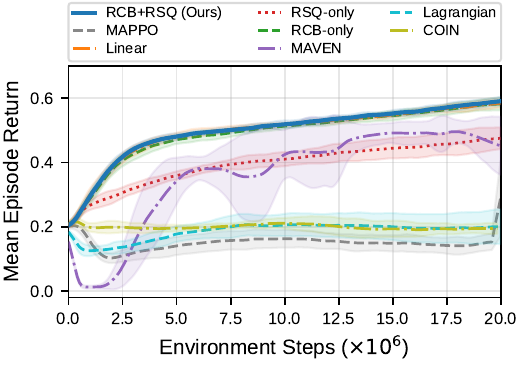}
        \caption{SMAX-3s5z (8 vs.\ 9)}
        \label{fig:lc_3s5z}
    \end{subfigure}
    \caption{Learning curves on MPE-tag (left) and SMAX-3s5z (right). Shaded regions indicate $\pm 1$ standard deviation. On tag, our method converges to the highest return. On SMAX-3s5z, three baselines collapse below $0.2$ while our method maintains the top position.}
    \label{fig:lc_tag_smax}
\end{figure*}

\begin{figure}[!htb]
    \centering
    \includegraphics[width=0.48\textwidth]{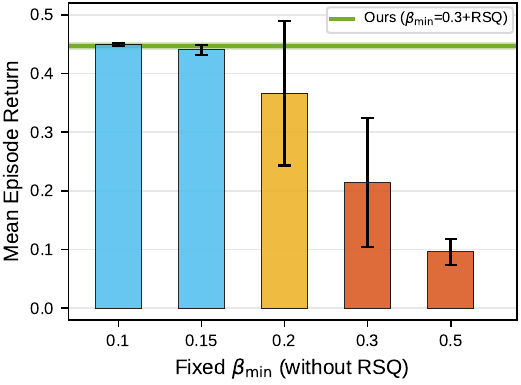}
    \caption{$\beta_{\min}$ sensitivity without RSQ on SMAX-27m. Without RSQ, increasing fixed $\beta_{\min}$ beyond $0.15$ drives the mean return below $0.2$ at $\beta_{\min} = 0.3$ and to near zero at $\beta_{\min} = 0.5$. With RSQ, Ours operates safely at the same $\beta_{\min} = 0.3$.}
    \label{fig:beta_sensitivity}
\end{figure}

On SMAX-3s5z (Table~\ref{tab:all_results}, Fig.~\ref{fig:lc_3s5z}), our method achieves the highest mean return ($0.591$), with Linear and RCB-only statistically tied ($p > 0.10$). The key contrast lies in stability, with COIN and Lagrangian both falling below $0.2$ and MAVEN reaching intermediate performance with $5.5\times$ higher variance than Ours. This task is also the only benchmark with genuinely heterogeneous agent types (stalkers vs.\ zealots), and Our per-agent RSQ modulation naturally adapts to the resulting differences in intrinsic reward dynamics across unit types.

On SMAX-27m (Table~\ref{tab:all_results}), our method and Linear are statistically indistinguishable at convergence ($p = 0.29$), while all other exploration baselines fall below $0.26$. The key distinction is \emph{exploration safety} (Fig.~\ref{fig:beta_sensitivity}), where fixed $\beta = 0.3$ without RSQ drives the mean return below $0.2$, whereas our method operates safely at the same $\beta_{\min} = 0.3$ by selectively attenuating noisy agents.\footnote{The $\beta$-sensitivity data also confirms the contraction condition~\eqref{eq:contraction} empirically: on SMAX ($\kappa{=}0.01$, $\beta_{\max}{-}\beta_{\min}{=}0.2$), the product is $5 \times 10^{-5} \ll 1$.}

\subsubsection{Continuous Control (MABrax)}
\label{subsec:mabrax}

On ant\_4x2, our method achieves the highest return with COIN and Lagrangian statistically tied (Table~\ref{tab:all_results}). On ant\_ball, our method leads with a $9.2\%$ improvement over RCB-only, confirming that per-agent RSQ modulation adds value beyond global scheduling. IPPO collapses on ant\_ball, indicating that independent learning cannot solve this tightly coupled locomotion-balance task.

On halfcheetah, our method and Lagrangian share the top position ($p = 0.95$), consistent with the homogeneous agent roles that limit RSQ differentiation. Across all three continuous tasks, our method delivers consistently strong performance, indicating that per-agent signal quality awareness combined with return-conditioned scheduling is well suited to multi-agent locomotion.

\subsection{Ablation Studies}
\label{subsec:ablation}

\subsubsection{Component Synergy}

Neither RCB-only nor RSQ-only matches the combined framework on corridor. RCB-only provides a global schedule but lacks per-agent signal awareness, so agents with unreliable intrinsic rewards destabilize the policy. RSQ-only improves substantially over unmodulated baselines (corridor mean $-20 \pm 554$ vs.\ MAPPO's $-1477$) but applies a fixed high exploration intensity throughout training, leaving residual instability on corridor and a $22\%$ lower return on tag ($+3823$ vs.\ $+4930$).

On SMAX-27m, RSQ-only achieves $0.448 \pm 0.007$ ($p = 0.72$ vs.\ Ours), suggesting that per-agent modulation is the dominant component. The narrow return range ($0$--$0.45$ vs.\ $R_{\mathrm{target}} = 3.0$) keeps $\beta$ nearly constant, limiting RCB's contribution. On wide-range tasks (corridor, tag, MABrax), both components contribute actively. The combined framework achieves the highest or statistically tied return on all seven environments.

\subsubsection{RSQ Sensitivity (\texorpdfstring{$\lambda$}{lambda})}

\begin{table}[!t]
    \centering
    \caption{$\lambda$ sensitivity on MPE-corridor ($\lambda=0$ is RCB-only). Best in \textbf{bold}.}
    \label{tab:sensitivity_lambda}
    \footnotesize
    \begin{tabular}{ccc}
        \toprule
        $\lambda$ & Mean & Std \\
        \midrule
        $0.0$ & $-551.5$ & $555.7$ \\
        $0.5$ & $-143.0$ & $536.2$ \\
        $1.0$ & $+27.2$ & $430.0$ \\
        $2.0$ & $-82.7$ & $576.7$ \\
        $\mathbf{3.0}$ & $\mathbf{+190.1}$ & $\mathbf{224.0}$ \\
        \bottomrule
    \end{tabular}
\end{table}

\begin{table}[!t]
    \centering
    \caption{$R_{\mathrm{target}}$ sensitivity on MPE-corridor. Best in \textbf{bold}.}
    \label{tab:sensitivity_rtarget}
    \footnotesize
    \begin{tabular}{ccc}
        \toprule
        $R_{\mathrm{target}}$ & Mean Return & Std \\
        \midrule
        $50$ & $-213.8$ & $502.7$ \\
        $100$ & $-229.3$ & $599.2$ \\
        $200$ & $+151.2$ & $314.3$ \\
        $400$ & $+190.1$ & $224.0$ \\
        $600$ & $+234.5$ & $299.6$ \\
        $800$ & $+105.2$ & $144.5$ \\
        $1000$ & $+198.0$ & $290.5$ \\
        \bottomrule
    \end{tabular}
\end{table}

Table~\ref{tab:sensitivity_lambda} shows that any RSQ modulation ($\lambda > 0$) improves both mean return and stability over the unmodulated baseline ($\lambda = 0$). The best result occurs at $\lambda = 3.0$ with $2.5\times$~lower standard deviation.

\subsubsection{\texorpdfstring{$R_{\mathrm{target}}$}{R\_target} Sensitivity}
\label{subsec:rtarget_sensitivity}

As noted in Remark~\ref{rem:rtarget}, the sigmoid's 5\%--95\% transition spans $\Delta R \approx 5.89/\kappa$ return units ($\approx 590$ for $\kappa = 0.01$), so shifting $R_{\mathrm{target}}$ by hundreds of units barely changes the $\beta$ trajectory. Table~\ref{tab:sensitivity_rtarget} confirms this, with all values in $[200, 1000]$ yielding positive mean return and only extreme values ($R_{\mathrm{target}} \leq 100$) destabilizing training.

\subsection{RSQ Dynamics Analysis}

\begin{figure*}[!t]
    \centering
    \includegraphics[width=0.78\textwidth]{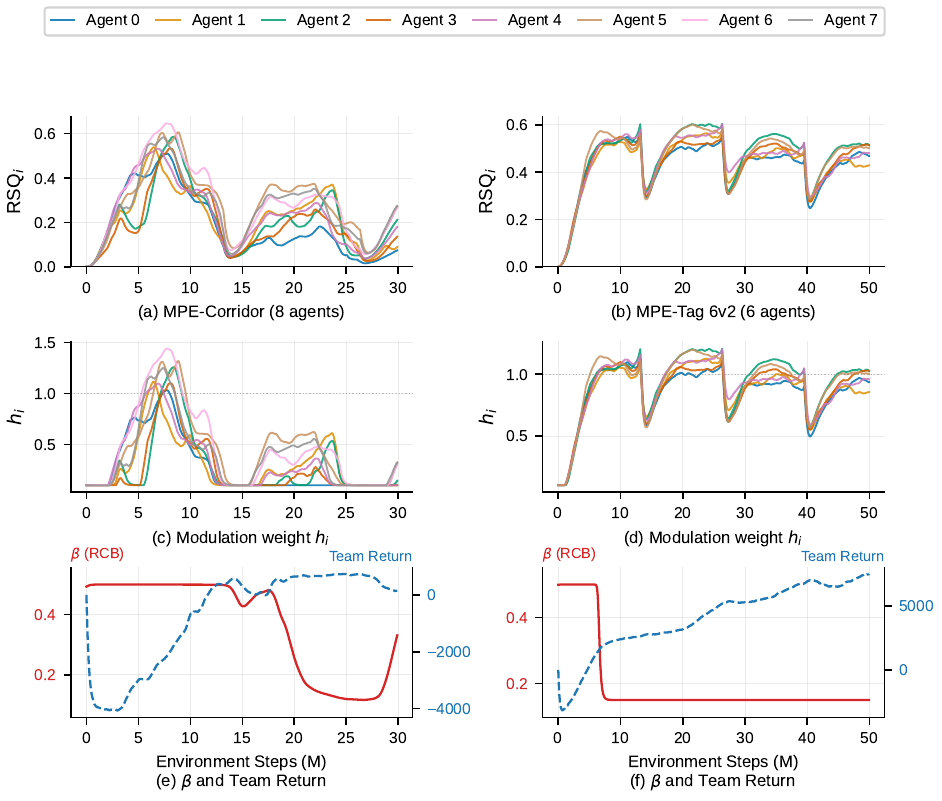}
    \caption{Per-agent RSQ dynamics during training on MPE-corridor (left) and MPE-tag (right). (a,b) Per-agent RSQ values differentiate over training, with corridor exhibiting wider inter-agent spread than tag. (c,d) Modulation weights $h_i$ reflect RSQ differences: agents with low RSQ are attenuated toward $h_{\min}$, concentrating exploration budget on reliable agents. (e,f) The global exploration intensity $\beta$ (RCB) adapts to team performance, decreasing as returns improve. The interaction between per-agent RSQ modulation and global RCB adaptation produces the corridor reliability property observed in Table~\ref{tab:all_results}.}
    \label{fig:rsq_dynamics}
\end{figure*}

Fig.~\ref{fig:rsq_dynamics} visualizes the per-agent RSQ dynamics during a representative training run. On corridor, agents exhibit substantial RSQ differentiation early in training (steps 5M--15M). Open-space agents produce stable SD values across the batch (high RSQ), while bottleneck agents experience high variance due to stochastic multi-agent interactions (low RSQ, lower modulation weights). As agents learn coordinated passage, RSQ values converge as SD intrinsic rewards decline with exploration saturation, but the RCB schedule has already reduced $\beta$ to near $\beta_{\min}$ at this stage. On tag, all agents achieve similar RSQ, consistent with symmetric predator roles. The global $\beta$ responds to team progress: on corridor it remains high until step 15M, while on tag it drops early as returns improve quickly. On halfcheetah, homogeneous roles produce near-uniform RSQ, consistent with the tied performance against Lagrangian in Table~\ref{tab:all_results}.
\FloatBarrier
\section{Conclusions and Future Work}
\label{sec:conclusion}

We presented a framework that resolves the safety-performance dilemma in multi-agent exploration through two complementary mechanisms. Return-conditioned scheduling (RCB) adapts global intensity with provable convergence (Theorem~\ref{thm:rcb_convergence}), and per-agent RSQ modulation concentrates the exploration budget on agents with reliable intrinsic reward signals while preserving the ordering of the information-optimal allocation (Proposition~\ref{thm:ordering}). Across seven cooperative benchmarks spanning discrete, large-scale, and continuous domains, our method achieves top-tier returns.

\textbf{Limitations.}
RSQ modulation requires that the intrinsic reward method produces per-agent signal quality variation (Assumption~\ref{assum:quality_variation}). In our experiments, only Successor Distance satisfies this requirement, while entropy, RND, and count-based rewards yield near-uniform RSQ across agents (\ref{app:intrinsic_source}). Additionally, the current framework assumes cooperative settings with a shared team reward, limiting direct applicability to competitive or mixed-motive scenarios.

\textbf{Future Work.}
Extending RSQ to work with a broader family of intrinsic reward methods is an important open problem. Mixed-motive and competitive settings where the team reward assumption no longer holds, as well as dynamic team compositions where agents join or leave mid-episode, would further test the adaptability of per-agent RSQ allocation. Combining RSQ with topology-aware intrinsic reward designs that exploit the communication graph structure may enhance signal quality differentiation in large-scale teams.

\section*{Acknowledgment}
Funding information will be provided upon acceptance.

\section*{CRediT authorship contribution statement}
\textbf{Dahyun Oh}: Conceptualization, Methodology, Software, Validation, Formal analysis, Investigation, Data curation, Writing -- original draft, Visualization. \textbf{Minhyuk Yoon}: Developing mathematical formulations including the proofs. \textbf{H. Jin Kim}: Supervision, Funding acquisition, Writing -- review \& editing.

\section*{Data availability}
The source code is publicly available at \url{https://github.com/DH-O/RRS}.

\appendix

\section{Proof of Theorem~\ref{thm:rcb_convergence}}
\label{app:rcb_convergence}

The proof proceeds in three steps: (1) establishing that the RCB map is a contraction, (2) bounding the deterministic tracking error, and (3) bounding the stochastic residual.

\textbf{Step 1: Contraction.}
Define the RCB map as a function of the EMA return:
\begin{equation*}
    g(R_{\mathrm{ema}}) \;=\; \beta_{\min} + (\beta_{\max}-\beta_{\min})\,\sigma\!\bigl(\kappa(R_{\mathrm{target}} - R_{\mathrm{ema}})\bigr),
\end{equation*}
so that $\beta^{(k)} = g(R_{\mathrm{ema}}^{(k)})$. The composed one-step map on $\beta$ is $\Phi(\beta) = g\bigl(\bar{R}(\beta)\bigr) = \beta_{\min} + (\beta_{\max}-\beta_{\min})\,\sigma(\kappa(R_{\mathrm{target}} - \bar{R}(\beta)))$. Since $\sigma \in (0,1)$, $\Phi$ maps $[\beta_{\min}, \beta_{\max}]$ to itself. By the chain rule,
\begin{equation*}
    \Phi'(\beta) = -(\beta_{\max}-\beta_{\min}) \cdot \kappa \cdot \sigma'(\cdot) \cdot \bar{R}'(\beta).
\end{equation*}
Since $\sigma'(x) = \sigma(x)(1-\sigma(x)) \leq 1/4$, we have
\begin{equation*}
    L_\Phi \;\triangleq\; \sup_\beta |\Phi'(\beta)| \;\leq\; \kappa(\beta_{\max}-\beta_{\min}) \cdot \sup_\beta|\bar{R}'(\beta)| \cdot \tfrac{1}{4}.
\end{equation*}
The contraction condition~\eqref{eq:contraction} states that the right-hand side is strictly less than 1, which gives $L_\Phi < 1$. By Banach's fixed-point theorem~\cite{granas2003fixed}, $\Phi$ has a unique fixed point $\beta^* = \Phi(\beta^*)$, with corresponding equilibrium return $R^* = \bar{R}(\beta^*)$.

\textbf{Step 2: Deterministic convergence.}
Let $e^{(k)} = R_{\mathrm{ema}}^{(k)} - R^*$. From the EMA update~\eqref{eq:return_ema} with $\xi^{(k)} = 0$:
\begin{align*}
    e^{(k+1)} &= \alpha_R\bigl(\bar{R}(\beta^{(k)}) - \bar{R}(\beta^*)\bigr) + (1-\alpha_R)\,e^{(k)}.
\end{align*}
We bound each term. For the first term, by the mean value theorem, $|\bar{R}(\beta^{(k)}) - \bar{R}(\beta^*)| \leq L_R|\beta^{(k)} - \beta^*|$ where $L_R = \sup_\beta|\bar{R}'(\beta)|$. Since $\beta^{(k)} = g(R_{\mathrm{ema}}^{(k)})$ and $\beta^* = g(R^*)$, the Lipschitz constant of $g$ is $|g'(R_{\mathrm{ema}})| = (\beta_{\max}-\beta_{\min})\kappa\,\sigma'(\cdot) \leq (\beta_{\max}-\beta_{\min})\kappa/4$, giving
\begin{equation*}
    |\beta^{(k)} - \beta^*| \;=\; |g(R_{\mathrm{ema}}^{(k)}) - g(R^*)| \;\leq\; \tfrac{(\beta_{\max}-\beta_{\min})\kappa}{4}\,|e^{(k)}|.
\end{equation*}
Combining: $\alpha_R|\bar{R}(\beta^{(k)}) - \bar{R}(\beta^*)| \leq \alpha_R L_R \cdot \frac{(\beta_{\max}-\beta_{\min})\kappa}{4}\,|e^{(k)}| = \alpha_R L_\Phi\,|e^{(k)}|$, where the last equality uses $L_\Phi = L_R \cdot (\beta_{\max}-\beta_{\min})\kappa/4$ from Step~1. Therefore,
\begin{equation*}
    |e^{(k+1)}| \;\leq\; \bigl[\alpha_R L_\Phi + (1-\alpha_R)\bigr]\,|e^{(k)}| \;=\; \rho\,|e^{(k)}|,
\end{equation*}
where $\rho = 1 - \alpha_R(1-L_\Phi)$. Since $\alpha_R \in (0,1)$ and $L_\Phi < 1$ (from the contraction condition), we have $\rho \in (0,1)$, so the error contracts geometrically.

\textbf{Step 3: Stochastic convergence.}
With noise $\xi^{(k)}$ (Assumption~\ref{assum:return_response}), the EMA update becomes
\begin{equation*}
    e^{(k+1)} = \underbrace{\alpha_R\bigl(\bar{R}(\beta^{(k)}) - \bar{R}(\beta^*)\bigr) + (1-\alpha_R)\,e^{(k)}}_{D^{(k)}} + \alpha_R\xi^{(k)},
\end{equation*}
where $D^{(k)}$ is the deterministic part satisfying $|D^{(k)}| \leq \rho|e^{(k)}|$ from Step~2.

Squaring: $|e^{(k+1)}|^2 = |D^{(k)}|^2 + 2\alpha_R D^{(k)}\xi^{(k)} + \alpha_R^2|\xi^{(k)}|^2$. Taking expectations, the cross term $\E[D^{(k)}\xi^{(k)}] = 0$ because $D^{(k)}$ depends only on $R_{\mathrm{ema}}^{(k)}$ (and hence on $\xi^{(0)}, \ldots, \xi^{(k-1)}$), while $\xi^{(k)}$ is the noise at iteration $k$, which is zero-mean and independent of prior iterations by Assumption~\ref{assum:return_response}. This gives:
\begin{equation*}
    \E\bigl[|e^{(k+1)}|^2\bigr] \;\leq\; \rho^2\,\E\bigl[|e^{(k)}|^2\bigr] + \alpha_R^2\,\sigma_\xi^2.
\end{equation*}
This is a linear recursion $a_{k+1} \leq \rho^2 a_k + \alpha_R^2\sigma_\xi^2$ with $\rho^2 < 1$. Unrolling gives $a_k \leq \rho^{2k} a_0 + \alpha_R^2\sigma_\xi^2/(1-\rho^2)$. As $k \to \infty$, the first term vanishes and
\begin{equation*}
    \lim_{k\to\infty} \E\bigl[|e^{(k)}|^2\bigr] \;\leq\; \frac{\alpha_R^2\sigma_\xi^2}{1-\rho^2}.
\end{equation*}
To connect with the $O$-notation in Theorem~\ref{thm:rcb_convergence}: since $\rho = 1 - \alpha_R(1-L_\Phi)$, we have $1-\rho^2 = (1-\rho)(1+\rho) = \alpha_R(1-L_\Phi)(1+\rho) \geq \alpha_R(1-L_\Phi)$ (because $\rho \geq 0$), so
\begin{equation*}
    \frac{\alpha_R^2\sigma_\xi^2}{1-\rho^2} \;\leq\; \frac{\alpha_R\,\sigma_\xi^2}{1-L_\Phi} \;=\; O\!\left(\frac{\alpha_R\,\sigma_\xi^2}{1-L_\Phi}\right).
\end{equation*}
\hfill$\square$

\section{Proof of Proposition~\ref{thm:ordering} (Ordering Preservation)}
\label{app:ordering_proof}

The water-filling formula~\eqref{eq:water_filling} is standard textbook material~\cite[Ch.~10]{cover2006elements}, and we omit its derivation. Our claim of interest is that the affine RSQ modulation preserves the SNR ordering it induces.

Write $h_i = (h \circ \rsq)(\mathrm{SNR}_i)$, with $\rsq(\mathrm{SNR}) = \mathrm{SNR}/(1+\mathrm{SNR})$ and $h(\rsq) = \mathrm{clip}(1 + \lambda(\rsq - \rsq_{\mathrm{ref}}), h_{\min}, h_{\max})$. The map $\rsq$ has derivative $1/(1+\mathrm{SNR})^2 > 0$ and is therefore strictly increasing on $(0, \infty)$. The clipped affine $h$ has slope $\lambda > 0$ in the unsaturated region $(h_{\min}, h_{\max})$ and slope $0$ when clipped, hence is non-decreasing. The composition $h \circ \rsq$ is therefore non-decreasing in $\mathrm{SNR}$, giving Part~(ii). Part~(i) follows directly from~\eqref{eq:water_filling}, since $1/\mathrm{SNR}_i$ is decreasing in $\mathrm{SNR}_i$. For Part~(iii), if $h_i, h_j \in (h_{\min}, h_{\max})$ then $h$ is the unclipped affine on both, so $h_i - h_j = \lambda(\rsq_i - \rsq_j)$, which is strictly positive whenever $\mathrm{SNR}_i > \mathrm{SNR}_j$ by the strict monotonicity of $\rsq$.

\section{Exact Water-Filling vs.\ Affine RSQ Modulation}
\label{app:wf_comparison}

The methodology (Sec.~\ref{subsec:rsq}) argues that the affine RSQ modulation is preferable to exact water-filling because the latter permanently suppresses agents whose estimated SNR falls below the water level. We validate this claim empirically on corridor (10 seeds each, identical hyperparameters except the allocation mechanism).

\begin{table}[t]
    \centering
    \caption{Exact water-filling vs.\ affine RSQ modulation on corridor (10 seeds). Exact water-filling exhibits 2.5$\times$ higher variance and a strongly degraded mean return because noisy SNR estimates trigger permanent agent suppression ($\beta_i = 0$), destabilizing team coordination.}
    \label{tab:wf_comparison}
    \begin{tabular}{lcc}
        \toprule
        Allocation & Mean Return & Std \\
        \midrule
        Exact water-filling & $-55$ & $560$ \\
        Affine RSQ (Ours) & $\mathbf{+190}$ & $\mathbf{224}$ \\
        \bottomrule
    \end{tabular}
\end{table}

Table~\ref{tab:wf_comparison} confirms that the affine modulation substantially outperforms exact water-filling, which suffers from irreversible agent suppression when noisy SNR estimates push agents below the water level.

\section{Validating the SD Requirement (Assumption~\ref{assum:quality_variation})}
\label{app:intrinsic_source}

To empirically validate Assumption~\ref{assum:quality_variation} and Insight~\ref{insight:quasimetric}, we replace Successor Distance with three alternative intrinsic reward methods on MPE-corridor: policy entropy, Random Network Distillation (RND)~\cite{burda2018exploration}, and visitation count. For each source, we apply the RSQ-modulated configuration over 5 seeds.

\begin{table}[t]
    \centering
    \caption{MPE-corridor performance by intrinsic reward source under RSQ modulation. SD uses 10-seed data from the main experiments (Table~\ref{tab:all_results}). Entropy, RND, and count use 5 seeds each. Only Successor Distance produces sufficient inter-agent RSQ variation for effective modulation. Entropy, RND, and count yield uniform $h_i \approx 1$ across agents, rendering RSQ modulation ineffective regardless of $\lambda$.}
    \label{tab:intrinsic_source}
    \footnotesize
    \begin{tabular}{llcc}
        \toprule
        Intrinsic & Modulation & Mean & Std \\
        \midrule
        SD (Ours) & RSQ ($\lambda = 3.0$) & $\mathbf{+190}$ & $224$ \\
        \midrule
        Entropy & RSQ ($\lambda = 1.5$) & $-1297$ & $275$ \\
        RND & RSQ ($\lambda = 1.5$) & $-1626$ & $51$ \\
        Count & RSQ ($\lambda = 1.5$) & $-1097$ & $936$ \\
        \bottomrule
    \end{tabular}
\end{table}

Table~\ref{tab:intrinsic_source} confirms that only SD produces sufficient inter-agent RSQ variation for effective modulation.

\section{Implementation Details}
\label{app:implementation}

Table~\ref{tab:all_hyper} summarizes all training and SD encoder hyperparameters. All baselines were tuned with comparable effort, selecting hyperparameters from 2--3 candidates on a single seed per environment, matching the tuning protocol used for our method (Table~\ref{tab:hyperparams} caption).

\begin{table}[ht]
    \centering
    \caption{Backbone training hyperparameters by domain.}
    \label{tab:all_hyper}
    \footnotesize
    \begin{tabular}{lccc}
        \toprule
        Parameter & MPE & MABrax & SMAX \\
        \midrule
        backbone & MAPPO & IPPO & MAPPO \\
        shared critic & yes & no & yes \\
        parameter sharing & no & no & no \\
        network type & GRU & FF & GRU \\
        hidden dim & 64 & 64 / 128$^\dagger$ & 64 \\
        GRU dim & 256 & --- & 128 \\
        FC layers & 2 & 2 & 2 \\
        actor learning rate & $1 \times 10^{-3}$$^\S$ & $3 \times 10^{-4}$ & $5 \times 10^{-4}$ \\
        critic learning rate & $5 \times 10^{-4}$ & $3 \times 10^{-4}$ & $5 \times 10^{-4}$ \\
        LR annealing & none & none & none \\
        discount $\gamma$ & 0.99 & 0.99 & 0.99 \\
        GAE $\lambda_{\mathrm{GAE}}$ & 0.95 & 0.95 & 0.95 \\
        clip ratio $\epsilon$ & 0.3 & 0.2 & 0.2 \\
        entropy coeff & 0.015 & 0.01 & 0.01 \\
        max grad norm & 0.6 & 10.0 & 10.0 \\
        update epochs & 10 & 4 & 8 \\
        training steps & $3 \times 10^7$ & $5 \times 10^7$ & $2 \times 10^7$ \\
        parallel envs & 200 & 512 & 128 \\
        rollout length & 256 & 64 & 128 \\
        intrinsic reward scale & 2.0 & 1.0 & 0.01 \\
        warmup rollouts & 0 & 5 & 0 / 20$^\ddagger$ \\
        seeds & 10 & 10 & 10 \\
        \bottomrule
        \multicolumn{4}{l}{\scriptsize $^\dagger$halfcheetah uses 128-dim; ant/ant-ball use 64-dim.} \\
        \multicolumn{4}{l}{\scriptsize $^\S$Tag uses $5 \times 10^{-4}$ actor LR.} \\
        \multicolumn{4}{l}{\scriptsize $^\ddagger$SMAX-27m uses 20 warmup rollouts; SMAX-3s5z uses 0.}
    \end{tabular}
\end{table}

\begin{table}[ht]
    \centering
    \caption{Successor Distance (SD) encoder hyperparameters, shared across all domains.}
    \label{tab:sd_hyper}
    \footnotesize
    \begin{tabular}{lc}
        \toprule
        Parameter & Value \\
        \midrule
        encoder architecture & Metric Residual Network (MRN) \\
        residual blocks & 10 \\
        latent dim & 64 \\
        output dim & 32 \\
        contrastive learning rate & $1 \times 10^{-3}$ \\
        SD discount $\gamma_{\mathrm{SD}}$ & 0.99 \\
        update epochs per rollout & 25 \\
        batch size & 512 \\
        \bottomrule
    \end{tabular}
\end{table}

\begin{table}[ht]
    \centering
    \caption{Per-domain SD input features $x_t^i$, using each agent's own continuous state features and excluding constant indicators (e.g., unit-type bits).}
    \label{tab:sd_features}
    \footnotesize
    \begin{tabular}{lcc}
        \toprule
        Domain & Features & $d_x$ \\
        \midrule
        MPE (corridor, tag) & spatial position $(x, y)$ & 2 \\
        SMAX & health, $x$-pos, $y$-pos, cooldown & 4 \\
        MABrax (ant\_4x2) & joint velocities & 2 \\
        MABrax (ant\_ball) & obs[0:10] (torso + joints) & 10 \\
        MABrax (halfcheetah) & joint velocities & 3 \\
        \bottomrule
    \end{tabular}
\end{table}

\clearpage

\section*{Declaration of generative AI and AI-assisted technologies in the manuscript preparation process}
During the preparation of this work the authors used Claude (Anthropic) in order to assist with grammar checking and code debugging. The authors reviewed and edited all output and take full responsibility for the content of the published article.

\bibliographystyle{elsarticle-num}
\bibliography{references}

\end{document}